\newtheorem{thm}{Theorem}[section]
\newtheorem{theorem}[thm]{Theorem}
\newtheorem{claim*}{Claim}
\newtheorem{definition}[thm]{Definition}
\newtheorem{lemma}[thm]{Lemma}
\newtheorem{proposition}[thm]{Proposition}
\title{Separating LREC from LFP\thanks{Research funded in part by EPSRC grant EP/S03238X/1.}}
\author{Anuj Dawar and Felipe Ferreira Santos \\
  Department of Computer Science and Technology\\ University of Cambridge.\\
  \texttt{anuj.dawar@cl.cam.ac.uk, ff334@cam.ac.uk}}
\begin{document}

\newcommand{\PT}{\ensuremath{\mathrm{P}}\xspace}
\newcommand{\NP}{\ensuremath{\mathrm{NP}}\xspace}
\newcommand{\LS}{\ensuremath{\mathrm{L}}\xspace}
\newcommand{\PSP}{\ensuremath{\mathrm{PSP}}\xspace}

\newcommand{\logic}[1]{\textsf{\upshape #1}\xspace} 

\newcommand{\FPC}{\logic{FPC}}
\newcommand{\FOC}{\logic{FOC}}
\newcommand{\FO}{\logic{FO}}
\newcommand{\LFP}{\logic{LFP}}
\newcommand{\LREC}{\logic{LREC}}
\newcommand{\lrec}{\logic{lrec}}
\newcommand{\DTC}{\logic{DTC}}
\newcommand{\CL}{\logic{CLog}}

\newcommand{\Acal}{\mathcal{A}}
\newcommand{\Bcal}{\mathcal{B}}
\newcommand{\Ccal}{\mathcal{C}}
\newcommand{\Dcal}{\mathcal{D}}
\newcommand{\Gcal}{\mathcal{G}}
\newcommand{\Hcal}{\mathcal{H}}
\newcommand{\Ical}{\mathcal{I}}
\newcommand{\Kcal}{\mathcal{K}}
\newcommand{\Lcal}{\mathcal{L}}
\newcommand{\Ncal}{\mathcal{N}}
\newcommand{\Pcal}{\mathcal{P}}
\newcommand{\Vcal}{\mathcal{V}}
\newcommand{\Zcal}{\mathcal{Z}}

\newcommand{\Afrak}{\mathfrak{A}}
\newcommand{\Bfrak}{\mathfrak{B}}
\newcommand{\Cfrak}{\mathfrak{C}}
\newcommand{\Dfrak}{\mathfrak{D}}
\newcommand{\Sfrak}{\mathfrak{S}}

\newcommand{\ctt}{\mathtt{c}}
\newcommand{\Ctt}{\mathtt{C}}
\newcommand{\Ett}{\mathtt{E}}
\newcommand{\Gtt}{\mathtt{G}}
\newcommand{\Utt}{\mathtt{U}}
\newcommand{\Vtt}{\mathtt{V}}
\newcommand{\Ytt}{\mathtt{Y}}
\newcommand{\Xtt}{\mathtt{X}}

\newcommand{\ROOT}{\mathtt{\underline{root}}}

\newcommand{\src}{S}
\newcommand{\tgt}{\mathtt{t}}

\newcommand{\gf}[1]{\mathbb{Z}_{#1}}
\newcommand{\id}[1]{\mathcal{I}_{#1}}
\newcommand{\trans}[1]{\widehat{#1}}
\newcommand{\outedge}[2]{[#1]_\mathfrak{#2}\mathtt{E}_\mathfrak{#2}}
\newcommand{\trc}{\mathrel{\accentset{\ast}{\sim}}}
\newcommand{\nst}[2]{\operatorname{norm}(#1,#2)}
\newcommand{\nnm}[2]{\|#1\|_{#2}}
\newcommand{\symm}[1]{\operatorname{Sym}(#1)}
\newcommand{\rk}[1]{\operatorname{rank}(#1)}
\newcommand{\dg}[1]{\operatorname{deg}(#1)}

\newcommand{\dom}{\operatorname{dom}}
\newcommand{\range}{\operatorname{range}}
\newcommand{\lift}[2]{\operatorname{lift}_{#1}(#2)}
\newcommand{\proj}{\operatorname{proj}}
\newcommand{\bij}{\operatorname{bij}}

\newcommand{\cl}[1]{\overline{#1}}
\newcommand{\mnh}[1]{\ensuremath{\text{min-h}(#1)}}
\newcommand{\mxh}[1]{\ensuremath{\text{max-h}(#1)}}
\newcommand{\hgt}[1]{\ensuremath{\text{height}(#1)}}
\newcommand{\nh}[1]{\ensuremath{\text{null-h}(#1)}}
\newcommand{\ih}[1]{\ensuremath{\text{iso-h}(#1)}}

\newcommand{\ra}{\rightarrow}
\newcommand{\ZZ}{\mathbb{Z}}
\newcommand{\Zp}{\ZZ/p\ZZ}
\newcommand{\voc}{\vartheta}
\newcommand{\nats}{\mathbb{N}}

\renewcommand{\phi}{\varphi}

\maketitle

\begin{abstract}
$\LREC_=$ is an extension of first-order logic with a logarithmic recursion operator.  It was introduced by Grohe et al.\ and shown to capture the complexity class $\LS$ over trees and interval graphs.  It does not capture $\LS$ in general as it is contained in $\FPC$---fixed-point logic with counting.  We show that this containment is strict.  In particular, we show that the path systems problem, a classic $\PT$-complete problem which is definable in $\LFP$---fixed-point logic---is not definable in $\LREC_=$.  This shows that the logarithmic recursion mechanism is provably weaker than general least fixed points.  The proof is based on a novel Spoiler-Duplicator game tailored for this logic.
\end{abstract}

\section{Introduction}\label{sec:intro}

The founding result of the subject of descriptive complexity is Fagin's characterization of $\NP$ as the class of properties definable in existential second-order logic~\cite{fagin}.  Similar characterizations for complexity classes below $\NP$ are not known and remain an active area of investigation.  Much attention has been devoted to the question of whether there is a logic for $\PT$, and this is often said to be the central open question of descriptive complexity.  Perhaps less well known but still wide open is the question of whether there is a logic giving a descriptive characterization of $\LS$---the class of properties of finite structures decidable deterministically in logarithmic space.

Immerman established logical characterizations of $\PT$ and $\LS$ on \emph{ordered} structures.  The former is given by the properties definable in $\LFP$---the extension of first-order logic with a least fixed-point operator (a result obtained independently by Vardi)---and the latter by $\DTC$---the extension of first-order logic with a deterministic transitive closure operator.   These logics are known to be strictly weaker than the corresponding complexity classes in the absence of a built-in order (see~\cite{Imm99} for details).  Moreover, on unordered structures, it is known that the expressive power of $\DTC$ is strictly weaker than that of $\LFP$~\cite{GraedelM95}, whereas the corresponding question for ordered structures is equivalent to the separation of $\LS$ from $\PT$.

Since the simplest properties separating $\LFP$ and $\DTC$ from the complexity classes $\PT$ and $\LS$ respectively are counting properties, there has also been much interest in the extensions of these logics with counting mechanisms.  The logic $\FPC$---fixed-point logic with counting---is widely studied (see~\cite{Dawar15} for a brief introduction).  Though it is known to be strictly weaker than $\PT$ by a celebrated result of Cai, F\"urer and Immerman~\cite{CFI92}, it gives a robust definition of the class of problems in $\PT$ which are symmetrically solvable~\cite{ADO19} and captures $\PT$ on a wide range of graph classes.  Most significantly, Grohe has proved that any polynomial-time property of graphs that excludes a fixed graph as a minor is definable in $\FPC$~\cite{Grohe17}.  Results showing that $\FPC$ captures $\PT$ on a class of structures $\Ccal$ (including Grohe's theorem) are typically established by showing that a polynomial-time canonical labelling algorithm on the class $\Ccal$ can be implemented in $\FPC$.  Such results were established for specific classes of graphs such as trees~\cite{ImmermanL90} and graphs of bounded tree-width~\cite{GroheM99} before Grohe's theorem which supersedes them all.

The situation for logics capturing $\LS$ is less clear-cut, even for classes of structures where it is known that logarithmic-space canonical labelling algorithms are possible.  For instance, Etessami and Immerman~\cite{etessami} showed that the extension of $\DTC$ with counting fails to capture $\LS$ even on trees.  What this suggests is that the weakness of $\DTC$ is not just in the lack of a means of counting but that the recursion mechanism embodied in the deterministic transitive closure operator is too weak.  An interesting suggestion to remedy this is in the logic $\LREC$ introduced in~\cite{GroheGHL12}, which incorporates a rich recursion mechanism which can still be evaluated within logarithmic space.  To be precise, the paper introduces two versions of the logic, one called $\LREC$ and the second $\LREC_{=}$.  While the first is shown to be sufficient to capture the complexity class $\LS$ on trees, it is the latter that is the more robust logic.  In particular, $\LREC_{=}$ properly extends $\LREC$ in expressive power, is closed under first-order interpretations and (unlike $\LREC$) can express undirected reachability.  Moreover, $\LREC_{=}$ has also been shown to capture $\LS$ on other interesting classes of structures~\cite{Grussien19}.  

It is known that $\LREC_{=}$ does not capture $\LS$ on all graphs.  In particular, $\LREC_{=}$ is included in $\FPC$ and the CFI construction that separates $\FPC$ from $\PT$ can be suitably padded to give a separation of $\LREC_{=}$ from $\LS$~\cite{clog}.  While Grohe et al.\ proved that $\LREC$ is properly contained in $\FPC$ (for instance, it cannot express reachability on undirected graphs), the question was left open for the stronger logic $\LREC_{=}$.  We settle this question in the present paper.  That is, we prove that the expressive power of $\LREC_{=}$ is strictly weaker than $\FPC$.  Indeed, we show that the \emph{path systems problem} (PSP), a natural $\PT$-complete problem that is expressible in $\LFP$ (even without counting) is not expressible in $\LREC_{=}$.  The result is established by first defining an Ehrenfeucht-style pebble game for the $L$-recursion quantifier used in the logic $\LREC_{=}$ and then deploying it on an intricate class of instances of PSP.  Describing the winning Duplicator strategy in the game is challenging and takes up the bulk of the paper.

\section{Preliminaries}\label{sec:prelims}

We assume that the reader is familiar with the definitions of the
basic logics used in finite model theory, in particular first-order
logic $\FO$, fixed-point logic $\LFP$ and their extension swith
coungint, $\FOC$ and $\FPC$ respectively.  These definitions can be found in standard
textbooks~\cite{EF99,libkin}.  We consider vocabularies with relation
and constant symbols, which we call \emph{relational vocabularies} for
short.  Formulas in a vocabulary $\sigma$ are interepreted in finite
$\sigma$-structures.  We often use Fraktur letters
$\Afrak,\Bfrak,\ldots$ for structures and the corresponding Roman
letter $A,B,\ldots$ to denote the universe of the structure.

In the case of logics with counting, such as $\FOC$ and 
$\FPC$, the interpretation also involves a \emph{number domain}.  The
logic allows for two kinds of variables (and more generally, two kinds
of terms) which range over elements and numbers respectively.  Thus, 
we consider  a $\sigma$-structure $\Afrak$ to be extended with a
domain of non-negative integers up to the cardinality of $\Afrak$.
Element variables are interpreted by elements of $\Afrak$ and number
variables by elements of the number domain.  We write $\Ncal(A)$ to
denote the number domain associated with the structure $\Afrak$.  This
can be identified with the set $\{0,\ldots,|A|\}$ and we assume this
is disjoint from the set $A$.  The logics include a counting operator
and we assume a normal form where every occurrence of the counting
operator is in a formula of the form $[\# x \phi] = \nu$ where $\nu$
is a number variable.  This formula is to be read as saying that the
number of elements $x$ satisfying $\phi$ is $\nu$.

Any formula in any of the logics defines a \emph{query}.  
If $\phi$ is a formula in vocabulary $\sigma$ and $\vec{x}$ is an $m$-tuple of element variables such
that all free variables of $\phi$ are among $\vec{x}$, then $\phi(x)$
defines an \emph{$m$-ary query}.  This is a map taking each
$\sigma$-structure $\Afrak$ to an $m$-ary relation $\phi^{\Afrak}$ on
the universe of $\Afrak$ containing the tuples $\vec{a}$ such that
$\Afrak \models \phi[\vec{a}]$.  If $m=0$ and so $\phi$ is a sentence,
we call the query a \emph{Boolean query}.  A Boolean query is often
identified with a class of $\sigma$-structures.  Note that the query is not completely specified by the formula $\phi$ alone, but rather by the formula along with the tuple of variables $\vec{x}$.  In particular the order on the variables is given by the tuple.

To allow for formulas which may also have free number variables, we consider  \emph{numerical queries}.  An $m$-ary numerical query is a map $Q$ that takes any $\sigma$-structures $\Afrak$ to a relation $Q(\Afrak) \subseteq A^m \times \nats$ and is invariant under isomorphisms.  The last condition means that if $\eta: A \ra B$ is an isomorphism from $\Afrak$ to $\Bfrak$, then $(\vec{a},n) \in Q(\Afrak)$ if, and only if, $(\eta(\vec{a}),n) \in Q(\Bfrak)$.  Such queries arise quite naturally.  As an example, the function that takes a graph $G$ to the number of connected components in $G$ is a $0$-ary numerical query.  The function taking a graph $G$ to the set of pairs $(a,n)$ where $a$ is a vertex in $G$ and $n$ the number of vertices in the connected component of $a$ is a $1$-ary numerical query.  

Suppose $\phi$ is a formula of a logic with counting, such as $\FOC$ or $\FPC$, with free variables among $\vec{x}$, where $\vec{x}$ contains $m_1$ element variables and $m_2$ number variables.  Without loss of generality we assume all element variables appear in $\vec{x}$ before the number variables.   Interpreted in a structure $\Afrak$, the formula $\phi$ defines a relation of mixed type: a subset of $A^{m_1} \times \Ncal(A)^{m_2}$.  We treat the tuples of numbers as coding positive integers in a natural way.  So, suppose $n = |A|$ and therefore $\Ncal(A) = \{0,\ldots,n\}$.  For a tuple $\vec{r} = (r_1,\ldots,r_{m_2})$ we write $\langle \vec{r} \rangle$ denote the number $$\sum\limits_{i=1}^{m_2} (n+1)^{i-1}r_i.$$
In this way, we can see $\phi$ as defining an $m_1$-ary numerical query which takes $\Afrak$ to the set $\{ (\vec{a},\langle \vec{r} \rangle) \mid \Afrak \models \phi[\vec{a},\vec{r}] \}$.

\subsection{Generalized Quantifiers and Operators}\label{sec:genop}
In the next subsection we define the logics $\LREC$ and $\LREC_=$
which were introduced by Grohe et al.~\cite{Grohe_2013}.  They were
originally defined as an extension of first-order logic with a new
kind of recursion operator called $L$-recursion, designed to be
computable in logspace.  We present an equivalent formulation through
a variation of generalized quantifiers in the style of Lindstr\"om.
To do this, we first briefly review interpretations and generalized quantifiers.  In what follows, fix a logic $\Lcal$, which can be any of the logics $\FO$, $\FOC$, $\LFP$ or  $\FPC$.  Whenever we refer to formulas, this is to be read as formulas of $\Lcal$.

Given two relational vocabularies $\sigma$ and $\tau$ and a positive
integer $d$, where $\tau = (R_1,\ldots,R_s)$ contains no constant
symbols, an interpretation of $\tau$ in $\sigma$ of dimension $d$ is a
sequence $\Ical$ of formulas 
$$(\delta(\vec{x},\vec{w}),\varepsilon(\vec{x},\vec{y},\vec{w}),\phi_{R_1}(\vec{z}_1,\vec{w}),\ldots,\phi_{R_s}(\vec{z}_s,\vec{w})),$$
where $\vec{x}$ and $\vec{y}$ are tuples of variables of length $d$,
and each $\vec{z}_i$ is a tuple of variables of length $dr_i$ where
$r_i$ is the arity of the relation symbol $R_i$.  The variables $\vec{w}$ are the \emph{parameter variables}.  Note that all tuples of variables may contain both element variables and number variables.

Let $\Afrak$ be a $\sigma$ structure and $\vec{p}$ be an interpretation for the parameter variables $\vec{w}$.  The interpretation~$\Ical$ associates a~$\tau$-structure~$\Bfrak$ to
$\Afrak,\vec{p}$ if there is a map~$h$ from~$\{ \vec{a} \in
(A \cup \Ncal(A))^d \mid \Afrak \models \delta[\vec{a},\vec{p}] \}$ to the universe~$B$ of~$\Bfrak$
such that: (i)~$h$ is surjective onto~$B$; (ii)~$h(\vec{a}_1) =
h(\vec{a}_2)$ if, and only if,~$\Afrak\models \varepsilon[\vec{a}_1,
  \vec{a}_2,\vec{p}]$; and (iii)~$R^{\Bfrak}(h(\vec{a}_1), \dots, h(\vec{a}_k))$
if, and only if,~$\Afrak \models \phi_R[\vec{a}_1, \dots, \vec{a}_k,\vec{p}]$.  Note that an interpretation~$\Ical$ associates
a~$\tau$-structure with~$\Afrak,\vec{p}$ only if~$\varepsilon$ defines an
equivalence relation on~$(A \cup \Ncal(A))^d$ that is a congruence with respect to the
relations defined by the formulaa~$\phi_R$. In such
cases, however,~$\Bfrak$ is uniquely defined up to isomorphism and we
write~$\Ical(\Afrak,\vec{p}) = \Bfrak$.

There are many kinds of interpretation that are defined in the
literature.  The ones we have defined here are fairly generous, in
that they allow for \emph{relativization}, \emph{vectorization} and
\emph{quotienting}.  The first of these means that the universe of
$\Ical(\Afrak)$ is defined as a subset of $(A \cup \Ncal(A))^d$ rather
than the whole set.  The formula $\delta$ is the relativizing
formula.  Vecotorization refers to the fact that $d$ can be greater
than $1$ and so $\Ical$ can map $\sigma$-structures to structures that
are polynomially larger, and quotienting refers to the fact that the
universe is obtained by quotienting with the congruence relation
defined by $\varepsilon$.

Let $K$ be an isomorphism-closed class of $\tau$-structures (or equivalently a
Boolean query).  A standard way of extending any logic $\Lcal$ to obtain a minimal extension $\Lcal[K]$ in which the Boolean query $K$ can be expressed  is to adjoin to $\Lcal$ the Lindstr\"om quantifier $Q_K$, corresponding to $K$ (see~\cite[Chapt.~12]{EF99}).  This allows us to write formulas of the form $Q_K\Ical$ where $\Ical$ is an interpretation as above.  The quantifier $Q_K$ binds the variables $\vec{x},\vec{y},\vec{z}_1,\ldots,\vec{z}_s$ appearing in $\Ical$ so the free varialbles of $Q_K\Ical$ are $\vec{w}$.  The formula $Q_K\Ical$ is true in a $\sigma$-structure $\Afrak$ with an intepretation $\vec{p}$ for the free variables if, and only if, $\Ical(\Afrak,\vec{p})$ is in $K$.

We generalize such quantifiers to queries that are not necessarily Boolean.  In general, consider an $m$-ary numerical query $K$ over $\tau$-structures.  We define $\Lcal[K]$, the extension of the logic $\Lcal$ with an operator for $K$ as follows.  Let $\Ical = (\delta(\vec{x},\vec{w}),\varepsilon(\vec{x},\vec{y},\vec{w}),\phi_{R_1}(\vec{z}_1,\vec{w}),\ldots,\phi_{R_s}(\vec{z}_s,\vec{w}))$ be an interpretation of $\tau$ in $\sigma$ of dimension $d$.  Let $\vec{u}$ be a tuple of variables of length $md$ and $\vec{s}$ a tuple of number terms.  Then,
$$[{Q_K}_{\vec{x},\vec{y},\vec{z}_1,\ldots,\vec{z}_s}\Ical](\vec{u},\vec{s})$$
is a formula with free variables $\vec{w},\vec{u},\vec{s}$.  This formula is true in a $\sigma$-structure $\Afrak$, with an interpretation of $\vec{p}$ for $\vec{w}$, $\vec{a}$ for $\vec{u}$ and $\vec{r}$ for $\vec{s}$ if, and only if,
$$([\vec{a}]_{\varepsilon},\langle \vec{r} \rangle) \in K(\Ical(\Afrak,\vec{p})).$$
Here, $[\vec{a}]_{\varepsilon}$ denotes the $m$-tuple of elements of $\Ical(\Afrak,\vec{p})$ obtained by taking the equivalence classes of the $m$ $d$-tuples that make up $\vec{a}$ under the equivalence relation defined by $\varepsilon$.

\subsection{\texorpdfstring{\LREC$_=$}{Lg}}\label{sec:logic-def}
In this subsection we define $\LREC_=$, the logic in which we prove the inexpressibility result.  We define the logic as an extension of $\FOC$ by means of a generalized operator in the sense of Section~\ref{sec:genop} above.  This definition is superficially different from that of Grohe et al.~\cite{GroheGHL12} where it is defined as a restricted fixed-point operator, but the two definitions are easily seen to be equivalent.

We first define a unary numeric query $\chi$ on a class of \emph{labelled}
graphs.  These are directed graphs $G = (V,E)$ together with a
labelling $C$ that gives for each vertex $v \in V$ a set of numbers
$C(v) \subseteq \Ncal(V)$.
For each such $(G,C)$ we  define
$\chi(G,C) \subseteq V \times \mathbb{N}$.  To do this, we first
introduce some notation.  For any  binary relation $R\subseteq A^2$,
and $a \in A$, let $aR$ denote the set $\{b\in A \mid (a,b)\in R\}$
and  $Ra$ denote  $\{b\in A \mid (b,a)\in R\}$.  Then $\chi$ is
defined by recursion on $l$ by the following rule:
    \[ (u, \ell) \in \chi(G,C) :\iff \ell \geq 0 \text{ and } |\{v \in uE \mid (v, \lfloor (\ell - 1)/|Ev|\rfloor) \in \chi(G,C)\}| \in C(u). \]

The logic $\LREC$ is the closure of $\FOC$ with a generalized operator
for computing the numeric query $\chi$.  Note that to get this
formally, we have to extend the idea of interpretation we had earlier
to be able to produce a \emph{labelled} graph.  In a logic with
couting, this is easily achieved as the interpretation will allow for
formulas defining relations of mixed element and number type.

To define the logic $\LREC_=$, we considered labelled
\emph{semi-graphs}. 
A semi-graph is a structure $G = (V,E,\sim)$, where $V$ is a set of
nodes,  and $E$ and $\sim$ are two binary relations on $V$.  Again, we
consider a semi-graph together with a labelling $C$ such that for each
$v \in V$, $C(v) \subseteq \Ncal(V)$.
 Let $\cong$ be the symmetric reflexive transitive closure of $\sim$
 and $[v]$ denote the equivalence class of $v$ under $\cong$.  Write
 $V/_{\cong}$ for set of such equivalence classes, i.e.\ the quotient
 of $V$ under this equivalence relation.  We write $[G]$ for
 the directed graph with vertex set $V/_{\cong}$ and edge set
 $$ [E] := \{ ([a],[b]) \mid (a',b') \in E \text{ for some } a' \in
 [a] \text{ and } b' \in [b] \}.$$
Write $\hat{C}$ for the labelling of $[G]$ given by $\hat{C}([v]) =
\bigcup_{u \in [v]}C(u)$.  We can then define the numeric query
$\hat{\chi}$ on labelled semi-graphs by the condition:
$$\hat{\chi}(G,C) = \chi([G],\hat{C}).$$

The logic $\LREC_=$ is then defined as the closure of $\FOC$ under a
generalized operator for the numeric query $\hat{\chi}$.  For
simplicity, we only allow applications of the operator to
interpretations without relativization of quotienting.  We explian below
why this is no loss of generality.  Indeed, it is also easily checked
that this restricted definition corresponds exactly to the original definition
of the logic $\LREC_=$ given by Grohe et al.

The syntax of the logic can then be defined as follows.
\begin{definition}\label{def:lrec-syntax}
  For a vocabulary $\tau$, the set of formulas of  $\LREC_=[\tau]$ is defined by extending $\FOC[\tau]$ formula formation rules by the following rule.
Suppose $\varphi_\sim(\vec{u},\vec{v}),\varphi_F(\vec{u},\vec{v})$, and $\varphi_C(\vec{u},\vec{p})$ are $\LREC_=$ formulas where $\vec{u}$,$\vec{v}$ and $\vec{w}$ are tuples of variables all of length $c$; $\vec{p}$ is a tuple of \emph{number} variables of length $d \leq c$ and  $\vec{r}$ is a tuple of number variables of length $q$, then 
 $$\varphi := [\textsf{lrec}_{\vec{u},\vec{v},\vec{p}}\varphi_E,\varphi_\sim,\varphi_C](\vec{w},\vec{r})$$ is an $\LREC_=$ formula.  The free variables of the formula are those in $\vec{w}$ and $\vec{r}$, along with the free variables in $\varphi_E$ and $\varphi_\sim$ that are not among $\vec{u}$ and $\vec{v}$ and those of $\varphi_C$ that are not among $\vec{u}$ and $\vec{p}$.
\end{definition}

To define the semantics of the logic, consider the formula $\textsf{lrec}_{\vec{u},\vec{v},\vec{p}}\varphi_E,\varphi_\sim,\varphi_C](\vec{w},\vec{r})$ interpreted in a $\tau$-structure $\Afrak$.  Let the semigraph $G = (A^c,E,\sim)$ be given by the interpretation $\varphi_E, \varphi_\sim$ on $\Afrak$ and $C$ be the labelling defined by $\varphi_C$.  Then the formula is satisfied in $\Afrak$ with $\vec{a}$ and $\vec{s}$ interpreting $\vec{w}$ and $\vec{r}$ respectively if, and only if, $(\vec{a},\langle \vec{s} \rangle) \in \hat{\chi}(G,C)$.

Note that this amounts to extending $\FOC$ with a generalized operator for the numeric query $\hat{\chi}$.  We have not allowed for relatavization or quotienting in the interpretations in our definition and we now explain why this involves no loss of generality.  The definition of $\hat{\chi}$ over the semi-graph $G$ is defined by taking $\chi$ over the quotient of $G$ with respect to the reflexive, symmetric and transitive closure of $\sim$.  Thus, we can replace a definable congruence in an interpretation by combining it with the definition of $\sim$ without changing the semantics.  By the same token, we can replace any relativizing formula $\delta$ by incorporating it in the definition of the binary relation $E$ and ensuring that only tuples that satisfy $\delta$ are involved in this binary relation.  It is easily seen from the definition of $\chi$ that for any vertex $v$ that has no $E$-neighbours, either $0$ is in $C(v)$ and thus all pairs $(v,\ell)$ are in $\chi(G,C)$ for all $\ell \in \mathbb{N}$, or 0 is not in $C(v)$ and thus no such pairs $(v,\ell)$ are in $\chi(G,C)$.

We define two parameters by which we measure the size of an $\LREC_=$
formula, rank and iteration-degree. As a base case, the rank and
iteration-degree of any atomic formula $\phi$ is 0.  If $\phi$ has
rank $r$ and iteration-degree $d$, then $\exists x \phi$, $\forall
x \phi$ and $[\# x \phi] = \nu$ all have rank $r+1$ and iteration degree $d$  Finally,  if we have the formula
$$\varphi := [\textsf{lrec}_{\vec{u},\vec{v},\vec{p}}\varphi_E,\varphi_\sim,\varphi_C](\vec{w},\vec{r})$$
where the length of $\vec{u},\vec{v}$ is $c$ and $\vec{p}$ is $d$, then the rank of $\varphi$, denoted $\rk{\varphi}$, is equal to $$\max[2c + \rk{\varphi_\sim}, 2c + \rk{\varphi_E}, c + d + \rk{\varphi_C}]$$ and if $\vec{r}$ has length $q$ then the iteration-degree of $\varphi$, denoted $\dg{\varphi}$, is equal to $$\max[q,\dg{\varphi_\sim},\dg{\varphi_E}, \dg{\varphi_C}].$$

Note that the rank and iteration-degree values may be distinct from each other, and the iteration-degree need not be larger than that of sub-formulas, regardless of the form of $\varphi$. This will be important when we introduce the game for the logic.

\subsection{The Path Systems Problem}
In this subsection we define the path systems problem (\PSP). Moreover we will define a specific subclass of \PSP instances which we will be using to prove that \PSP is inexpressible in $\LREC_=$. 

The problem is defined as a class of structures in a vocabulary with  a ternary relation $R$, unary relation $\src$, and constant $\tgt$.  Given a universe $U$ and a relation $R \subseteq U^3$, we define the upward closure of any set $X \subseteq U$ as the smallest superset $Y$ of $X$ such that for $a,b,c\in U$, if $a,b$ is in $Y$ and $(a,b,c)$ is in $R$, then $c$ is in $Y$.

The decision problem $\PSP$ then consists of those structures $\Afrak$ for which $\tgt^\Afrak$ is in the upwards closure of $\src^\Afrak$.


As we will be using \PSP to prove a separation between $\LREC_=$ and $\LFP$, we verify that \PSP is definable in $\LFP$.  This is well known.  For completeness, we give the defining sentence.
$$\varphi := [\textsf{lfp}_{X,u}\src(u)\vee(\exists v,w. X(v) \wedge X(w) \wedge R(v,w,u))](\tgt).$$

\section{\texorpdfstring{\textsf{LREC}$_=$}{Lg} Game}\label{game-sec}

In order to prove the inexpressibility result for $\LREC_=$, we
introduce a Spoiler-Duplicator game.  This game does not exactly
characterize the expressive power of the logic, but it provides a
sufficient condition for indistinguishability of structures. 

The game we define is an extension of the classic
Ehrenfeucht-Fra\"{\i}ss\'e game for first-order logic played on a pair
of structures $\Afrak$ and $\Bfrak$, where Spoiler aims to demonstrate
a difference between the two structures and Duplicator aims to
demonstrate that they are not distinguishable.  Before
introducing it formally, we make some observations.  Because our 
logic includes counting, the games are based on bijection games, which
characterise first-order logic with counting.  Thus, our games have
two kinds of moves---bijection moves, which account for counting and
ordinary quantifers, and what we call \emph{graph moves} which account
for the $\lrec$ operator.

In a graph move, Spoiler chooses an interpretation which defines a
labelled semi-graph from each of $\Afrak$ and $\Bfrak$.  An auxilliary game is then played on the
pair of semi-graphs in which Spoiler aims to show that the two
semi-graphs are distinguished by the query $\hat{\chi}$ while
Duplicator attempts to hide the difference between them.  At any point
in this auxilliary game, Spoiler can choose to revert to the main
game, for instance if a position is reached from where Spoiler can win
the ordinary bijection game.

It should be noted that in requiring Spoiler to provide an explicit
interpretation in the graph game, we build in definability in the
logic $\LREC_=$ into the rules of the game.  In that sense, it does
not provide an independent characterization of definability.
Nonetheless, it does give a sufficient criterion for proving
undefinability.  Similarly, Theorem~\ref{game-thm} below only gives
one direction of the connection between the game and the logic.  That
is, it shows that if there is a formula distinguishing $\Afrak$ and
$\Bfrak$, it yields a winning strategy for Spoiler.  We do not claim,
and do not need, the other direction.

A final remark is that we have, for simplicity, only defined the game
for the special case where $\Afrak$ and $\Bfrak$ are structures over
the same universe.  Again, this is sufficient for our purpose as the
structures we construct on which we play the game satisfy this
condition.

The game is parameterized by the two measures of size of formulas we
introduced in Section~\ref{sec:logic-def}: the rank and iteration
degree.   To
define the game, we first introduce some notation for partial maps.

For any sets $A,B, C$ and $D$ with $A\subseteq B$ and $C \subseteq D$, and function $f:A\to C$, we say $f$ is a \emph{partial function} between $B$ and $D$, which we can denote by $f:B\hookrightarrow D$. We say that a function $f': B \hookrightarrow D$ \emph{extends} $f$ if $f'|_{A} = f$. Let $\dom{f}$ denote the set $A$ where $f$ is defined. For partial function $g:B\hookrightarrow D$ with $g(a) = f(a)$ for $a \in \dom{f}\cap\dom{g}$, let $f\cup g : B\hookrightarrow D$ be the partial function defined on $\dom{f}\cup\dom{g}$ with $[f\cup g](a) = f(a)$ if $a$ is in $\dom{f}$ and $[f\cup g](a) = g(a)$ otherwise.

\subsection{Games}
We now give the formal definition of the game and prove its adequacy.

\begin{definition}\label{def:game}
The \emph{$k$-step, $q$-degree $\LREC_=$ game}  over vocabulary $\tau$
is played by two players,  \emph{Spoiler} and \emph{Duplicator} on two
$\tau$-structures $\Afrak$ and $\Bfrak$ of same universe $U$ of size
$n$.  At any stage of the game, the game position consists of a
partial injection $f:U \hookrightarrow U$ where $f(e^\Afrak) =
f(e^\Bfrak)$ for all constants $e$ in $\tau$, and the domain of $f$
has at most $k$ elements.

Let $\beta = \{b_1,b_2,\ldots,b_{|\beta|}\}$ be the domain of $f$, and
fix an ordering on it so that $\vec{b}$ is the tuple
$(b_1,b_2,\ldots,b_{|\beta|})$.  As long as $|\beta| < k$, Spoiler can
play either an extension move or a graph move.  These are defined as follows.

\begin{itemize}
    \item \emph{Extension move}: If  Spoiler selects this move then
      the duplicator begins by choosing a bijection $g:U\to U$ that
      extends $f$. The spoiler then picks some $a \in U$.  The
      resulting position is $f' = f\cup g|_{\{a\}}$.
    \item \emph{Graph move}:  In this move  Spoiler begins by
      selecting $c \leq (k-|\beta|)/2$,  and $\LREC_=$ queries
      $\varphi_E(\vec{x},\vec{y})$, and
      $\varphi_\sim(\vec{x},\vec{y})$ where $\vec{x}$ and $\vec{y}$
      are tuples of variables of length $c$. Both queries have
      iteration-degree at most $q$ and rank at most $k-|\beta|-2c$.
      Let  $V$ be $(U\cup\Ncal(U))^c$, $G_\Afrak = (V, E_\Afrak,
      \sim_\Afrak)$ be the semi-graph defined in $(\Afrak,\vec{b})$
      by the interpretation of $(\varphi_F,\varphi_\sim)$, and
      $G_\Bfrak = (V, E_\Bfrak, \sim_\Bfrak)$ be the corresponding
      semi-graph obtained from $(\Bfrak, f(\vec{b}))$.  The two
      players now play a game on these structures through a series of
      rounds.  At each round $i$, the position of this game consists
      of a tuple $\vec{a}_i \in V$, a partial injection $h_i: U
      \hookrightarrow U$ with domain of size at most $c$, and a value
      $\ell_i \leq n^q$.  Initially, Spoiler chooses $\vec{a}_0 \in
      (\beta\cup\Ncal(U))^c$ and $\ell_0 \leq n^q$ and $h_0$ is set to
      the empty map.

     If $\ell_i = 0$, then the graph move ends and the main game
      continues from the new position $f' = f \cup h_i$.  Also,
      Spoiler may choose to end the graph move and continue the main
      game from the position  $f' = f \cup h_i$.  Otherwise we
      continue to move $i+1$ which proceeds as follows
\begin{enumerate}
        \item  Duplicator begins by choosing a partial bijection
          $g_i:U\to U$ with the property that
          $g_i^{-1}(f_i(\vec{a}_i)) \cong_\Afrak \vec{a}_i$ where $f_i
          = f \cup h_i$.
        \item For each $Y \subseteq U$ with $|Y| \leq c$, Duplicator
          chooses an injection  $h_Y:Y\to X$ satisfying the following
          conditions for all $\vec{u},\vec{v} \in V$.  Here, for any $\vec{v} \in V$, we write
          $U(\vec{v})$ to denote the set of elements of $U$ that occur
          in the tuple $\vec{v}$.
          \begin{enumerate}
          \item $(g_i(\vec{a}_i), h_{U(\vec{v})}(\vec{v})) \in
            [E_\Bfrak]$ if, and only if, $(\vec{a}_i,\vec{v}) \in
            [E_\Afrak]$; 
          \item  $|[E_\Bfrak]h_{U(\vec{v})}(\vec{v})| =
            |[E_\Afrak]\vec{v}|$; and 
          \item  $h_{U(\vec{u})}(\vec{u})= h_{U(\vec{v})}(\vec{v})$
            if, and only if, $\vec{u}=\vec{v}$. 
          \end{enumerate}
          If Duplicator cannot choose such a set of partial
          bijections, then it loses the game.
        \item Spoiler chooses some $\vec{a}_{i+1} \in
          \vec{a}_i[E_\Afrak]$ and we let $\ell_{i+1} = \lfloor
          (\ell_i-1)/|[E]\vec{a_{i+1}}|\rfloor$ and  $h_{i+1} =
          h_{U(\vec{a}_{i+1})}$.
    \end{enumerate}
\end{itemize}
Spoiler wins if at any point $f$ is not a partial isomorphism between
$\Afrak$ and $\Bfrak$. Duplicator wins if $|\dom{f}|$ reaches $k$ and
Spoiler has not won.
\end{definition}

We can now show how establishing how Duplicator winning strategies in the game can be used to show inexpressibility results for $\LREC_=$.

\begin{theorem}
\label{game-thm}
Suppose $\Afrak$ and $\Bfrak$ are two $\tau$-structures with the same universe $U$,  $f:U \hookrightarrow U$ is  a partial injection with $f(e^\Afrak) = f(e^\Bfrak)$ for all constants $e$ in $\tau$ and that $\vec{a}$ enumerates the domain of $f$.  If  Duplicator has a winning strategy in the $k$-step, $q$-degree $\LREC_=$ game over $\Afrak,\Bfrak$ and $f$, then $(\Afrak,\vec{a})$ and $(\Bfrak,f(\vec{a}))$ agree on all $\LREC[\tau]$ formulas with rank at most $k-|\dom(f)|$ and iteration-degree at most $q$.
\end{theorem}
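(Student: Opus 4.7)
I would prove the theorem by induction on the rank $r$ of a distinguishing formula, in contrapositive form. Specifically, I would show: if there exists an $\LREC_=[\tau]$ formula $\phi$ with $\rk{\phi} \leq r$ and $\dg{\phi} \leq q$ such that $(\Afrak,\vec{a}) \models \phi$ but $(\Bfrak,f(\vec{a})) \not\models \phi$, then Spoiler has a winning strategy in the $k$-step, $q$-degree game from $f$, whenever $k - |\dom f| \geq r$. The base case $r = 0$ reduces to atomic formulas, where disagreement already means $f$ is not a partial isomorphism, so Spoiler wins without playing any move.

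For the inductive step I would treat each syntactic case. Boolean connectives reduce directly to the hypothesis on subformulas. For a first-order quantifier $\phi = \exists x\,\psi$, Spoiler plays an extension move: whatever bijection $g$ Duplicator proposes, Spoiler picks an $\Afrak$-witness $a$ for $\psi$; since no witness exists in $\Bfrak$, $\psi$ distinguishes the new pair, and the hypothesis applies with rank $r-1$ and one fewer remaining step. For a counting formula $[\#x\,\psi]=\nu$, the differing $\psi$-counts force any bijection $g$ to pair some $a$ with $g(a)$ on which $\psi$ disagrees, and Spoiler picks that element.

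The main case is the $\lrec$ operator $\phi = [\lrec_{\vec{u},\vec{v},\vec{p}}\varphi_E,\varphi_\sim,\varphi_C](\vec{w},\vec{r})$. Spoiler plays a graph move reusing the interpretation from $\phi$ with the same $c$; the rank decomposition from Section~\ref{sec:logic-def} gives $\rk{\varphi_E},\rk{\varphi_\sim} \leq r - 2c \leq k - |\beta| - 2c$ and $\rk{\varphi_C} \leq r - c - d$, meeting the game's constraints, while iteration-degrees are at most $q$. Spoiler initialises $\vec{a}_0 = \vec{a}$ and $\ell_0 = \langle \vec{s}\rangle$, so that by hypothesis $([\vec{a}_0],\ell_0)$ lies in $\hat{\chi}$ on exactly one of the two semi-graphs. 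He then descends the recursion tree of $\chi$, maintaining the invariant that $([\vec{a}_i],\ell_i)$ is in $\hat{\chi}$ on one side but not the other. Unfolding the recursive definition of $\chi$ at $[\vec{a}_i]$, the disagreement must originate from either (i) the quotient labellings $\hat{C}_{\Afrak}$ and $\hat{C}_{\Bfrak}$ differing on the relevant count, or (ii) the number of $\chi$-good $[E]$-neighbours differing between the two semi-graphs. In case (i) Spoiler ends the graph move: the residual disagreement is witnessed by $\varphi_C$ at $(\vec{a}_i,\vec{p})$, and since $|\dom f'| \leq |\beta|+c$, the remaining budget $\geq r-c$ still exceeds $\rk{\varphi_C}$, so the hypothesis applies. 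In case (ii), I use the conditions on Duplicator's chosen injections $h_Y$ — preservation of $[E]$-edges from $\vec{a}_i$, out-degree preservation, and injectivity — to view $h$ as a pairing of $[E]$-neighbourhoods, after which a pigeonhole argument on how many neighbours are $\chi$-good produces a neighbour $\vec{a}_{i+1}$ on which the disagreement survives at the strictly smaller level $\ell_{i+1} = \lfloor(\ell_i-1)/|[E]\vec{a}_{i+1}|\rfloor$. The bound $\ell_0 \leq n^q$ together with this strict decrease ensures termination.

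The hardest step is case (ii): extracting a concrete neighbour with surviving disagreement from a count mismatch across Duplicator's pairing, while being faithful to the quotienting along $\cong$. The preliminary bijection $g_i$ that Duplicator chooses in step~1 of the graph move is exactly what lets Spoiler switch between $\cong_{\Afrak}$-representatives of $\vec{a}_i$ when matching to the $\cong_{\Bfrak}$-class of $f_i(\vec{a}_i)$. A secondary bookkeeping task is tight rank and iteration-degree accounting — especially when terminating the graph move to invoke the hypothesis on $\varphi_C$ — to maintain the inequality $k - |\dom f| \geq r$ throughout.
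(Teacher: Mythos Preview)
Your proposal is correct and follows essentially the same argument as the paper: a contrapositive induction on rank, with extension moves handling quantifiers and counting, and a graph move for the $\lrec$ operator where Spoiler descends the $\chi$-recursion maintaining a disagreement invariant until either the labelling formula $\varphi_C$ distinguishes (case (i)) or a pigeonhole argument on Duplicator's injections yields a disagreeing neighbour (case (ii)). One small slip: condition (b) in the graph move preserves \emph{in}-degrees $|[E]\vec{v}|$, not out-degrees, but this does not affect your argument.
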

\begin{proof}
  We prove the contra-positive, so assume that  some $\LREC[\tau]$-formula
  $\varphi$ with rank at most $k-|\beta|$ and iteration-degree at most
  $q$ is true in $(\Afrak,\vec{a})$ and false in
  $(\Bfrak,f(\vec{a}))$. If $k - |\beta| = 0$ then  $f$ 
  does not induce a partial isomorphism between $\Afrak$ and
  $\Bfrak$, since otherwise the quantifier-free formula $\phi$ would
  distinguish $\vec{a}$ from $f(\vec{a})$.  Suppose then that $k- |X|
  \geq 1$. If $\varphi$ is of the form $\exists x [\psi(x)]$, $\forall
  x [\psi(x)]$ or $[\# x \psi(x)] = \mu$, then Spoiler plays the
  extension move.  Since $\varphi$ is true in $(\Afrak,\vec{a})$ and false in
  $(\Bfrak,f(\vec{a}))$, the number of elements satisfying $\psi$ in
  the two structures is different.  Hence, for any bijection $g$ extending
  $f$ that Duplicator might pick, there is some  some $u \in U$ such
  that $(\Afrak,\vec{a},u)$ and $(\Bfrak,f(\vec{a}),g(y))$ do not
  agree on $\psi(x)$.  That is to say, either $\psi$ or $\neg \psi$ is
  a formula true in the former but false in the latter.  Either of
  these is a formula of rank at most $k-|\beta| - 1$ and so if 
  Spoiler chooses $u$ and extends $f$ to
  $f\cup g|_{\{u\}}$ it wins by the induction hypothesis.

Assume next that $\varphi$ is of the form $$[\lrec_{\vec{x},\vec{y},\vec{p}} \varphi_E,\varphi_\sim, \varphi_C](\vec{w},\vec{r})$$
where the lengths of $\vec{x}$ and $\vec{y}$ are both $c \leq
(k-|\beta|)/2$ andthe length of $\vec{r}$ is $q' \leq q$. To simplify
notation, let $\vec{a}_\Afrak$ denote $\vec{a}$ and $\vec{a}_\Bfrak$
denote $f(\vec{a})$. Let $V$ be $(U\cup\Ncal(U))^c$,  and for  $\Sfrak
\in \{\Afrak,\Bfrak\}$, let $G_\Sfrak = (V,
E_\Sfrak, \sim_\Sfrak)$ be the semi-graph obtained by the
interpretation $(\varphi_F,\varphi_\sim)$ in 
$(\Sfrak,\vec{a}_\Sfrak)$ and $C_\Sfrak$ the labelling defined on it
by  $\varphi_C$.  For any $\vec{a} \in V$, we write $\tilde{a}$ for the set of elements that occur in $\vec{a}$ and $U(\tilde{a})$ for the set of those elements of $\tilde{u}$ that are in $U$.

 Spoiler then chooses the starting node $\vec{a}_0$ to be the
 $(\Afrak,\vec{a})$ interpretation of $\vec{w}$, and $\ell_0$ to be
 the value $\langle \vec{s} \rangle$ where $\vec{s}$ is the
 interpretation in $(\Afrak,\vec{a})$ of $\vec{r}$.  Note that by our
 choice of $\varphi$, it is true that $(\vec{a}_0,\ell_0) \in
 \chi[G_\Afrak,\langle C_\Afrak \rangle]$ and  $(f(\vec{a}_0),\ell_0)
 \not\in \chi[G_\Bfrak,\langle C_\Bfrak \rangle]$.  We prove by
 induction on $i$ that if $(\vec{a}_i,\ell_i) \in
 \chi[G_\Afrak,\langle C_\Afrak \rangle]$ if, and only if,
 $(f(\vec{a}_i),\ell_i) \not\in \chi[G_\Bfrak,\langle C_\Bfrak
 \rangle]$ then  Spoiler has a winning strategy in the game move.
 Assume without loss of generality that $(\vec{a}_i,\ell_i)$ is in
 $\chi[G_\Afrak,\langle C_\Afrak \rangle]$.  For $\Sfrak \in \{\Afrak,\Bfrak\}$, let $m_\Sfrak$ be the number
$$|\{v \in u[F_\Sfrak] \mid (v, \lfloor (\ell_i -
1)/|[F_\Sfrak]v|\rfloor) \in \chi(G_\Sfrak,C_\Sfrak)\}|.$$
Then,either $m_\Afrak \neq m_\Bfrak$ or $m_\Afrak \in C_\Afrak$ but
$m_\Bfrak \notin C_\Bfrak$. So, if $m_\Afrak = m_\Bfrak$  we must have
the latter case, where  Spoiler can simply reset $f$ to $f_i$ and
proceed by playing the game on the structures
$(\Afrak,\vec{a},\vec{a}_i,\vec{s})$ and
$(\Bfrak,f(\vec{a}),f(\vec{a}_i),\vec{s})$ where $\langle \vec{s}
\rangle = m_\Afrak = m_\Bfrak$. Here we also have the base case --- if
$\ell_i = 0$ then $m_\Afrak = m_\Bfrak = 0$. If, instead, $m_\Afrak
\neq m_\Bfrak$ then  Spoiler proceeds to the next step of the
iteration. If  Duplicator is able to come up with a valid set of
partial bijections in step 3 of the iteration, there must be some
$\vec{b} \in \vec{a}_i[F]$ with $(\vec{b},\ell') \in
\chi(G_\Afrak,C_\Afrak)$ if and only if
$(h_{U(\tilde{b})}(\vec{b}),\ell') \notin \chi(G_\Bfrak,C_\Bfrak)$,
where $\ell' = \lfloor(\ell-1)/|[F_\Afrak]\vec{b}|\rfloor$ (which is
also equal to
$\lfloor(\ell-1)/|[F_\Bfrak]h_{U(\tilde{b})}(\vec{b})|\rfloor$
according to the conditions Duplicator's choice must satisfy). Thus, by the induction hypothesis,  Spoiler has a winning strategy if it picks $\vec{a}_i$ to be $\vec{b}$.
\end{proof}

\subsection{Structures}
Here we describe the particular instances of the path systems problem
for which we construct Duplicator winning strategies in the game we
have just defined, and outline the strategy.

Consider a tree $T = (V,E)$ to be defined as a directed graph
where edges are oriented from a parent to its children.
The structures we consider are obtained by taking the product of a
complete binary tree $T$ with a large cyclic group of prime order
$\gf{p}$. 
The unary relation $S$ then encodes an assignment of values
in $\gf{p}$ to the leaves of the tree and the ternary relation $R$ is
chosen so that determining whether the target $\tgt$ is in the upward
closure of $S$ amounts to summing these values along the tree.
We give a formal definition for future reference.
\begin{definition}\label{def:structures}
  For a positive integer $h$, let $T = (V,E)$ be the complete binary tree
  of height $h$, $L \subseteq V$ the set of its leaves and $\textbf{root} \in V$
  its root.  For any
  prime $p$, function $s: L \to \gf{p}$ and element $t \in \gf{p}$
  define the structure $\Pcal(h,p,s,t)$ to be the instance of $\PSP$ with
  \begin{enumerate}
  \item universe $V \times \gf{p}$
  \item $ \left((u,a),(v,b),(w,c)\right) \in R \iff
    a + b = c \text{ AND } E(w,u) \text{ AND } E(w,v)$
  \item $S = \{(l,\sigma(l) \mid l \in L \}$
  \item $\tgt = (\textbf{root},t).$
  \end{enumerate}
\end{definition}
Note that for each node $v \in V$ there is a unique value of $a \in
\gf{p}$ for which $(v,a)$ is in the upward closure of $S$.  To be
precise, $a$ is the sum (modulo $p$) of all the values of $\sigma(l)$
for leaves $l$ below $v$ in the tree $T$.  In particular
$\Pcal(h,p,s,t)$ is a positive instance of $\PSP$ if, and only if, $t
= \sum_{l \in L} \sigma(l)$.  What we aim to show is that two
structures $\Pcal(h,p,s,t)$ and  $\Pcal(h,p,s,t')$ with $t \neq t'$ 
are
indistinguishable by $\LREC_=$ formulas of rank $k$ and degree $q$ as
long as $h$ and  $p$ are large enough with respect to $k$ and $q$.

Intuitively it is clear that a fixed-point definition of the upwards
closure of $S$, such as given by the $\LFP$ formula will determine in
$h$ (i.e.\ the height of the tree) iterations whether $\tgt$ is in the
closure of $S$.  What makes this difficult for $\LREC$ is that at each
element $u \in U$ there are $p$ distinct pairs of elements $v,w \in U$
for which $(u,v,w) \in R$.  The number of paths multiplies giving
$p^h$ distinct ways of reaching $\tgt$, and by an appropriate choice
of $p$ and $h$, we can ensure that this is not bounded by a polynomial
in the number of elements of $U$, which is $p2^h$.  Moreover, we
cannot eliminate the multiplicity of paths by taking a suitable quotient.
Of course, this
only shows that the obvious inductive method of computing the upward
closure of $S$ cannot be implemented in $\LREC_=$.  To give a full
proof, we have to consider all other ways that this might be defined,
and that is the role of the game.

In the winning strategy we describe, Duplicator plays particular
bijections which we now describe.  Let $\Afrak$ and $\Bfrak$ be two
instances of $\PSP$ obtained as described above from a  tree $T = (V,
E)$ and prime $p$, with the same set $S$ and different values of
$\tgt$.  Note, in particular, that $\Afrak$ and $\Bfrak$ have the same
universe $U = V \times \gf{p}$.

For any set $X \subseteq V$.  A function $\rho: X \to \gf{p}$ induces
a bijection $\bij(\rho) : X \times \gf{p} \to X \times \gf{p}$ given
by $\bij(\rho)(v,a) = (v,a+\rho(v))$.  We extend this to a bijection
on the set $\Dcal(X) = $ given by  $f(v,a) := (v,a+\rho(a))$ for all $v
\in X$ and $(X\times \gf{p})\cup \Ncal(U)$, by letting it be the
identity on all elements of $\Ncal(U)$.  In the Duplicator strategies
we describe, all bijections played are of this form.  Thus, we usually
describe them just by specifying the function $\rho$ which we call the
\emph{offset function}.  We abuse terminology somewhat and say that
$\rho$ is a partial isomorphism from $\Afrak$ to $\Bfrak$ to mean that
$\bij(\rho)$ induces a partial isomorphism.

Roughly speaking, Duplicator's winning strategy is to play offset
functions which are zero at the leaves of the tree $T$ and offset by
the difference between the values of $\tgt$ in $\Afrak$ and $\Bfrak$
at the root.  Spoiler has to try and expose this inconsistency by
building a path between the root and the leaves.  We show that
Duplicator can maintain a height (depending on the parameters $k$ and
$q$) below which it plays offsets of zero, without the inconsistency
being exposed.  Of course, Duplicator has to respond to graph moves in
the game, so we have to consider paths in the interpreted semigraphs,
where each node may involve elements from many different heights in
the tree.  This is what makes describing the Duplicator winning
strategy challenging.  In the next section we develop the tools for
describing it.

\section{Duplicator Winning Pebblings and Extender functions}\label{sec:extender}

The Duplicator strategies we describe in the next section for the $\LREC$ game rely on certain combinatorial properties of the complete binary tree $T$ and certain offset functions $\rho: T \ra \gf{p}$.  In this section we develop some combinatorial properties of such trees and functions that allow us to effectively describe the strategies.

Let $T=(V,E)$ be a complete (directed) binary tree with $N = 2^n$ leaves.  The \emph{height} of a node $v\in V$, denoted $\hgt{v}$ is the distance of $v$ to a leaf (since $T$ is complete, this is the same for all leaves reachable from $v$).  Thus, if $v$ is a leaf its height is $0$ and if $v$ is the root of $T$, its height is $n$.

Let $R_T$ be the set of triples $(x,y,z)$ such that $E(x,y)$ and $E(x,z)$ and let $R  = \{ \{x,y,z\} \mid (x,y,z) \in R_T\}$.  In other words $R$ is the collection of unordered sets of three elements consisting of a node of $T$ along with its two children.  We say that a three element set $\{x,y,z\}$ is \emph{related} if $\{x,y,z\} \in R$.  We call an element of $R$ a \emph{related triple}.

Say a set $X \subseteq V$ is \emph{closed} if for every related triple $\{x,y,z\} \in R$, if $x,y \in X$ then $z \in X$.  For every $X\subseteq V$, there is a unique minimal closed set $\cl{X}$ such that $X \subseteq \cl{X}$.  We call $\cl{X}$ the \emph{closure} of $X$.

Let $\mnh{X}$ denote the minimum height of any element of $X$ and $\mxh{X}$ denote the maximum height of any element of $X$.

\begin{proposition}
  $\mnh{\cl{X}} = \mnh{X}$
\end{proposition}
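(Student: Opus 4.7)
The plan is to prove the two inequalities separately. The direction $\mnh{\cl{X}} \leq \mnh{X}$ is immediate from $X \subseteq \cl{X}$, so the real work is showing $\mnh{\cl{X}} \geq \mnh{X}$.

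For the nontrivial direction, I would proceed by induction on the construction of $\cl{X}$. Think of the closure as being built in stages: let $X_0 = X$ and let $X_{i+1}$ be obtained from $X_i$ by adding a single element $z$ such that there is a related triple $\{x,y,z\} \in R$ with $x,y \in X_i$. Since $V$ is finite, this process reaches a fixed point which equals $\cl{X}$. It suffices to show that $\mnh{X_{i+1}} \geq \mnh{X_i}$ for every $i$.

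The key combinatorial point is that in any related triple $\{a,b,c\} \in R$ consisting of a node together with its two children, the parent has height exactly one more than each child, and the two children share the same height. Therefore, when we add a new element $z$ completing a triple $\{x,y,z\}$ with $x,y \in X_i$, a small case analysis on which of $x,y,z$ is the parent shows that $\hgt{z}$ is always equal to $\hgt{x}$ or $\hgt{y}$: if $z$ is the parent then $\hgt{z} = \hgt{x}+1 = \hgt{y}+1$, which is strictly greater than the heights of $x$ and $y$; if $z$ is one of the children then the other child is among $\{x,y\}$ and has the same height as $z$. In no case is $\hgt{z}$ strictly less than both $\hgt{x}$ and $\hgt{y}$, so $\hgt{z} \geq \min(\hgt{x},\hgt{y}) \geq \mnh{X_i}$, giving $\mnh{X_{i+1}} \geq \mnh{X_i}$ as required.

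There is no real obstacle here; the proof is essentially a one-line case analysis on the role played by the newly added vertex in its related triple, coupled with the obvious induction on closure stages. The only thing to be a little careful about is making clear that $\cl{X}$ can indeed be obtained by this one-element-at-a-time procedure, which follows from the minimality of $\cl{X}$ among closed supersets of $X$.
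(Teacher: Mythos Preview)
Your proposal is correct and follows essentially the same approach as the paper: build the closure in stages and check inductively that the minimum height does not decrease, using the observation that in any related triple the newly added element has height at least that of one of the two present elements. The only cosmetic differences are that the paper adds all completing elements simultaneously at each stage (rather than one at a time) and phrases the inductive step as $\mnh{X_{i+1}} = \mnh{X_i}$ directly rather than splitting into two inequalities.
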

\begin{proof}
  Consider the sequence of sets given by $X_0 = X$ and $X_{i+1} = X_i \cup \{ z \mid \{x,y,z\} \in R \text{ for some } x,y \in X_i \}$.  Then $\cl{X} = \bigcup_i X_i$.  A simple induction on $i$ shows that $\mnh{X_{i+1}} = \mnh{X_i}$ for all $i$, establishing the claim.
%
\end{proof}

For any pair of vertices $x,y \in V$, there is a unique undirected path $z_1,\ldots,z_k \in V$  such that $x =z_1$, $y = z_k$ and for all $i$ with $1 \leq i < k$, we have $E(z_i,z_{i+1})$ or $E(z_{i+1},z_{i})$.  Say that a set $X \subseteq V$ is \emph{connected} 
whenever $x,y \in X$ the undirected path from $x$ to $y$ is contained in $X$.

A \emph{connected component} of a closed set $X$ is a maximal closed connected subset of $X$.   It is clear that every closed set $X$ is a disjoint union of closed connected components.

For $v \in V$, we say a set $X \subseteq V$ \emph{encloses} $v$ if every path from $v$ to a leaf goes through an element of $X$.  We say that $X$ \emph{minimally encloses} $v$ if $X$ encloses $v$ and no proper subset of $X$ encloses $v$.

If $X$ is a closed connected set, there is a unique element $r \in X$ with  $\hgt{r} = \mxh{X}$.  Moreover, there is a set $F \subseteq X$ which minimally encloses $r$ and such that $X$ is exactly the set of elements $y \in V$ such that $y$ is on the path from $r$ to $f$ for some $f \in F$.  We call $F$ the \emph{frontier} of $X$.  Note that this allows the trivial case when $X = F = \{r\}$.  In all cases, we have $\overline{F} = X$.  We write $\hgt{X}$ for $\hgt{r} = \mnh{X}$.

\begin{proposition}
  If $X$ is a closed connected set with root $r$ and frontier $F$ then
  $$ |F| > \hgt{X}.$$
\end{proposition}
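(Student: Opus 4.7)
I would prove this by induction on $h := \hgt{X}$, interpreting this parameter as the depth $\mxh{X} - \mnh{X}$ of the sub-structure, so that the inequality holds even in the explicitly permitted degenerate case $X = F = \{r\}$ with $r$ not a leaf.

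For the base case $h = 0$, every element of $X$ has the same height, and connectedness then forces $X$ to reduce to the singleton $\{r\}$; thus $F = \{r\}$ and $|F| = 1 > 0$.

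For the inductive step $h \geq 1$, let $r_1, r_2$ denote the two children of $r$ in $T$. Since $\mnh{X} < \mxh{X} = \hgt{r}$, the set $X$ contains some strict descendant of $r$, and the undirected path from $r$ to this descendant, which lies in $X$ by connectedness, must pass through one of $r_1$ or $r_2$; the closure condition applied to the related triple $\{r, r_1, r_2\}$ then forces \emph{both} children into $X$. Let $X_i := \{y \in X : y = r_i \text{ or } y \text{ is a descendant of } r_i\}$ for $i \in \{1,2\}$. Each $X_i$ is itself closed and connected with root $r_i$, and $\hgt{X_i} \leq h-1$, with equality attained for the index $i$ realising $\mnh{X}$. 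I would then show that the frontier splits as a disjoint union $F = F_1 \sqcup F_2$, where $F_i$ is the frontier of $X_i$: disjoint because descendants of $r_1$ and $r_2$ live in disjoint subtrees, and exhaustive because $r \notin F$ once both its children belong to $X$ (any path from $r$ down to a leaf is already blocked by a descendant of one of its children). The inductive hypothesis gives $|F_i| \geq \hgt{X_i} + 1$, hence $|F| = |F_1| + |F_2| \geq \hgt{X_1} + \hgt{X_2} + 2 \geq h + 1 > h$.

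The step I expect to be the main obstacle is justifying the frontier decomposition $F = F_1 \sqcup F_2$ rigorously in terms of the paper's definition. Concretely, one must verify that $F_i := F \cap X_i$ really is the frontier of $X_i$ --- that it minimally encloses $r_i$ inside $X_i$, and that the union of paths from $r_i$ to its elements recovers all of $X_i$. One direction follows by restricting the global minimality and path-union properties of $F$ to the subtree rooted at $r_i$; the other relies on $r \notin F$, ensuring every $f \in F$ lies strictly below $r$ and hence inside exactly one of the two subtrees. Once this decomposition is pinned down, the remainder of the proof is the elementary arithmetic above.
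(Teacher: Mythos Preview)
Your proposal is correct and follows essentially the same inductive argument as the paper: split $X$ at the root into two closed connected subtrees, note that both children of $r$ lie in $X$ by connectedness and closure, and combine the frontier sizes. The only cosmetic difference is that the paper applies the induction hypothesis to just one subtree (the one realising $\mnh{X}$) and uses merely $|F_2|\geq 1$ for the other, whereas you apply it to both; the arithmetic comes out the same, and your worry about justifying $F = F_1 \sqcup F_2$ is something the paper simply asserts without further comment.
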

\begin{proof}
  Let $h = \hgt{X}$.  The proof proceeds by induction on $h$.  If $h = 0$, then $F = \{r\}$ and the inequality is satisfied.  Suppose then that $h > 0$, and so there is some element in $X$ whose height is less than that of $r$.  Since $X$ is connected this implies that some child of $r$ is in $X$ and since $X$ is closed this implies both children of $r$ are in $X$.  Let the two children of $r$ be $s_1$ and $s_2$.  Then, we have $X = X_1 \cup X_2 \cup\{r\}$ where $X_1$ and $X_2$ are closed connected sets with roots $s_1$ and $s_2$ respectively.  Let $F_1$ and $F_2$ be the respective frontiers of $X_1$ and $X_2$ and note both of these are non-empty.  Since $\mnh{X_1} \geq \mnh{X}$ and $\hgt{s_1} = \hgt{r} -1$, we have $\hgt{X_1} < h$ and so by induction hypothesis $|F_1| > \hgt{X_1}$.  Since $F_2$ is non-empty, we have
  $$|F| = |F_1| + |F_2| >  \hgt{s_1} + 1 - \mnh{X_1} \geq \hgt{r} - \mnh{X} = \hgt{X}$$
  as required.
\end{proof}

For a closed set $X \subseteq V$, we say a function $\rho: X \ra \Zp$ is \emph{consistent} if whenever $x,y,z  \in X$ are such that $E(x,y)$ and $E(x,z)$, then $\rho(x) = \rho(y) + \rho(z)$.   If $X \subseteq Y$, we say that a function $\rho': Y \ra \Zp$ \emph{extends} $\rho$ if $\rho'|_{X} = \rho$.  The following is a useful characterization of consistent functions on closed sets.
\begin{proposition}\label{prop:consistent}
  For a closed set $X$,  $\rho: X \ra \Zp$ is consistent if, and only if, for every $F \subseteq X$ and $x\in X$ such that $F$ minimally encloses $x$, we have $\rho(x) = \sum_{y \in F} \rho(y)$.
\end{proposition}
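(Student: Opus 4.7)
The plan is to prove the two directions separately, with the reverse implication essentially immediate and the forward one requiring induction together with an auxiliary closure lemma.

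For $(\Leftarrow)$: for every related triple $\{x,y,z\} \subseteq X$ with $y,z$ the children of $x$, the pair $F = \{y,z\}$ minimally encloses $x$, since every path from $x$ to a leaf goes through exactly one of $y$ or $z$ and dropping either leaves the opposite subtree unblocked. Applying the hypothesis to this $F$ yields $\rho(x)=\rho(y)+\rho(z)$, which is exactly the consistency condition.

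For $(\Rightarrow)$ I would induct on $\hgt{x}$. Fix $x \in X$ and $F \subseteq X$ minimally enclosing $x$. If $x \in F$ then minimality forces $F=\{x\}$ and the identity is trivial. Otherwise $x$ is a non-leaf with children $y,z$, and I partition $F = F_y \sqcup F_z$ according to which subtree each element lies in. Minimality of $F$ forces both $F_y$ and $F_z$ to be non-empty and to minimally enclose the respective child. Provided the inductive hypothesis applies at $y$ and $z$, I get $\rho(y)=\sum_{f\in F_y}\rho(f)$ and $\rho(z)=\sum_{f\in F_z}\rho(f)$; consistency at the related triple $\{x,y,z\}$ then supplies $\rho(x)=\rho(y)+\rho(z)=\sum_{f\in F}\rho(f)$.

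The main obstacle is that applying the inductive hypothesis at $y$ and $z$ requires $y,z \in X$, which does not follow directly from $F_y,F_z \subseteq X$, since the closure operation does not propagate downward from ancestors. To handle this I would prove a separate lemma by a secondary induction on $\hgt{v}$: if $X$ is closed and $F' \subseteq X$ encloses a non-leaf $v$ with $v \notin F'$, then both children of $v$ lie in $X$. The base case $\hgt{v}=1$ holds because any set enclosing a leaf must contain that leaf. In the step, for each child $c$ of $v$, either $c \in F' \subseteq X$ directly, or $c$ is a non-leaf not in $F'$; in the latter case the sub-inductive hypothesis applied to the portion of $F'$ in the subtree below $c$ places both children of $c$ in $X$, and closure then forces $c \in X$. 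With this lemma in hand, the main induction closes.
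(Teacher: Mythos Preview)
Your argument is correct and follows the same scheme as the paper: the reverse direction is immediate from $\{y,z\}$ minimally enclosing their parent, and the forward direction proceeds by induction, partitioning $F$ into the parts $F_y,F_z$ lying below each child and recursing. The paper inducts on $\hgt{\cl{F}}$ rather than on $\hgt{x}$, but the decomposition is identical. The one notable difference is that the step you flag as the main obstacle---showing that the children $y,z$ lie in $X$ so that $\rho(y),\rho(z)$ are defined---is glossed over in the paper, which simply writes $\rho(s_1)+\rho(s_2)$ without comment; your auxiliary lemma correctly supplies this missing justification, so your treatment is in fact more careful here.
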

\begin{proof}
  The direction from right to left is immediate, since for any $x,y,z \in X$ with $(x,y,z) \in R_T$, we have that $F = \{y,z\}$ minimally encloses $x$.  Thus, by assumption $\rho(x) = \sum_{w \in F} \rho(w) = \rho(y) + \rho(z),$ as required.

  In the other direction, assume that $\rho$ is consistent and suppose $F$ minimally encloses $x$.  Let $h = \hgt{\cl{F}}$ and we proceed by induction on $h$.  If $h=0$ then $F = \{x\}$ and  $\rho(x) = \sum_{y \in F} \rho(y)$ is clearly true.  Suppose then the claim is true for all $F'$ with $\hgt{\cl{F'}} \leq h$ and let $F$ be a set minimally enclosing $x$ with $\hgt{F} = h+1$.  Let $s_1$ and $s_2$ be the two children of $x$ and $F_1$ and $F_2$ the subsets of $F$ that minimally enclose $s_1$ and $s_2$ respectively.  Then $ \sum_{y \in F} \rho(y) = \left( \sum_{y \in F_1} \rho(y) \right) + \left( \sum_{y \in F_2} \rho(y) \right) = \rho(s_1) + \rho(s_2) = \rho(x)$.  Here the first equality holds from the fact that $F_1$ and $F_2$ form a partition of $F$, the second by induction hypothesis and the third by the consistency of $\rho$.
\end{proof}

For any (not necessarily closed)  set $X \subseteq V$, we say a function $\rho: X \ra \Zp$ is \emph{consistent} if there is a consistent $\rho': \cl{X} \ra \Zp$ which extends $\rho$.  Note that if such a $\rho'$ exists then it is unique.   We can extend Proposition~\ref{prop:consistent} to sets $X$ that are not closed.
\begin{proposition}\label{prop:non-closed}
  For any set $X$, $\rho: X \ra \Zp$ is consistent if, and only if, for every $F \subseteq X$ and $x\in X$ such that $F$ minimally encloses $x$, we have $\rho(x) = \sum_{y \in F} \rho(y)$.
\end{proposition}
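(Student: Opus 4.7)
The forward direction is immediate from Proposition~\ref{prop:consistent}: by the definition of consistency for non-closed sets, there exists a consistent $\rho': \cl{X} \to \Zp$ extending $\rho$, so applying Proposition~\ref{prop:consistent} to $\rho'$ on $\cl{X}$ yields the sum identity on any pair $F \subseteq X \subseteq \cl{X}$, $x \in X \subseteq \cl{X}$ with $F$ minimally enclosing $x$; and then $\rho'(x) = \rho(x)$ together with $\rho'(y) = \rho(y)$ for $y \in F$ deliver the claim for $\rho$.

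For the backward direction, my plan is induction on $|\cl{X} \setminus X|$. The base case $X = \cl{X}$ is exactly Proposition~\ref{prop:consistent}. For the inductive step, pick some $z \in \cl{X} \setminus X$ lying in the first layer of the closure construction, so there is a related triple $\{x,y,z\} \in R$ with $x, y \in X$, and define $\rho'$ on $X' = X \cup \{z\}$ by extending $\rho$ with the value of $\rho'(z)$ forced by consistency on that triple (namely $\rho(x) + \rho(y)$ if $z$ is the parent, or $\rho(y) - \rho(x)$ if $z$ is a child whose parent is $y$ and sibling is $x$). A preliminary well-definedness check handles the case in which both triples containing $z$ have their other two members in $X$: the sum identity applied to the three-element set $\{c_1, c_2, s\} \subseteq X$, which minimally encloses $z$'s parent, shows that both rules agree. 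I would then prove that $\rho'$ on $X'$ still satisfies the sum property, whereupon the inductive hypothesis applied to $X'$ (for which $|\cl{X'} \setminus X'| = |\cl{X} \setminus X| - 1$, using $\cl{X'} = \cl{X}$) yields a consistent $\rho'': \cl{X}\to \Zp$ extending $\rho'$ and hence $\rho$.

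Verifying the sum property for $\rho'$ on $X'$ is a case analysis on a candidate pair $(F', x')$ with $F' \subseteq X'$ minimally enclosing $x' \in X'$, according to whether $z \in F'$ and whether $x' = z$. If $z \notin F'$ and $x' \neq z$ the identity reduces directly to the sum property for $\rho$ on $X$. If $x' = z$ and $z \notin F'$, the set $F' \cup \{s\}$, with $s$ the sibling of $z$, minimally encloses $z$'s parent $p \in X$, so the sum property on $X$ applied to this set, combined with the defining formula $\rho'(z) = \rho(p) - \rho(s)$, gives the required equality. If $z \in F'$ with $x' \neq z$, the strategy is to substitute $z$ in $F'$ by elements of $X$: by $z$'s two children if $z$ was added to the closure as a parent, or by $z$'s parent (after absorbing into the substitution any descendants of $z$'s sibling already sitting in $F'$) if $z$ was added as a child. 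In each case one checks that the resulting $F^* \subseteq X$ is still a \emph{minimal} enclosing set of $x'$, and that the defining formula for $\rho'(z)$ together with the sum property on $X$ applied to $F^*$ combine to give $\sum_{w \in F'} \rho'(w) = \sum_{w \in F^*} \rho(w) = \rho(x') = \rho'(x')$.

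The main obstacle I expect is this last case: one must verify that the substitution preserves minimality of the antichain cover (so that the sum property on $X$ is applicable to $F^*$), and that the algebraic bookkeeping absorbs exactly the right elements of $F'$. The most delicate sub-case is when $z$ was added as a child and $F'$ contains, instead of $z$'s sibling $s$, a minimal antichain cover of $s$'s subtree: one then has to invoke the sum property on $X$ twice, once for that antichain to replace its total mass by $\rho(s)$, and once for the resulting $F^*$, in order to cancel the $-\rho(x)$ contribution coming from $\rho'(z)$. The tree property that each vertex is involved in at most two related triples (one as parent, one as child) is precisely what keeps this bookkeeping finite and what makes the induction on $|\cl{X} \setminus X|$ close.
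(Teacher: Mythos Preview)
Your approach is essentially identical to the paper's: both argue the forward direction via Proposition~\ref{prop:consistent} and the backward direction by adjoining one element of $\cl{X}\setminus X$ at a time while preserving the minimal-enclosure sum identity through the same case analysis on whether the new element is the head $x'$ or lies in $F'$, and on whether it was forced as a parent or as a child. Your handling of the ``most delicate sub-case'' (where $z$ lies in $F'$ as a child and the sibling's subtree is covered by a deeper antichain rather than by the sibling itself) is in fact more careful than the paper's, which simply asserts without full justification that the sibling must itself lie in $F$.
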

\begin{proof}
  In one direction, suppose $\rho$ is consistent and there is an $F \subseteq X$ minimally enclosing $x \in X$.    Then this $F$ and $x$ is present in $\cl{X}$ and so, by the consistency of  $\rho$ we have $\rho(x) = \sum_{y \in F} \rho(y)$ by Proposition~\ref{prop:consistent}.

  In the other direction,  suppose $\rho(x) = \sum_{y \in F} \rho(y)$ for all  $F$ minimally enclosing $x$ in $X$.  If $X$ is closed, there is nothing to prove by Proposition~\ref{prop:consistent}.  If not, then we can enumerate the elements of $\cl{X}$ that are not in $X$ in an order $x_1,\ldots,x_c$ so that for each $x_{i+1}$, there are elements $y,z \in X \cup \{x_1,\ldots,x_i\}$ such that $\{x_{i+1},y,z\} \in R$.  Fix such an order and write $X_i$ for the set $X \cup \{x_1,\ldots,x_i\}$.  We prove by induction on $i$ that we can extend $\rho$ to  a function $\rho_i$ on $X_i$ such that $\rho_i(x) = \sum_{y \in F} \rho_i(y)$ for all  $F$ minimally enclosing $x$ in $X_i$.  Since $X_n = \cl{X}$ is closed, the result then follows by Proposition~\ref{prop:consistent}.

  We let $\rho_0= \rho$ and the base case follows by assumption.  Suppose $\rho_i$ has been defined to satisfy the inductive hypothesis.  If there is exactly one pair of elements $y,z \in X_i$ such that $\{x_{i+1},y,z\} \in R$ then we extend $\rho_i$ to $x_{i+1}$ in the natural way.  That is if $y$ and $z$ are the two children of $x_{i+1}$ we let $\rho_{i+1}(x_{i+1}) = \rho_i(y) + \rho_i(z)$ and if $y$ is the parent and $z$ the sibling of $x_{i+1}$ we let $\rho_{i+1}(x_{i+1}) = \rho_i(y) - \rho_i(z)$.  In each case, we can verify that $\rho_{i+1}$ satisfies the condition that $\rho_{i+1}(x) = \sum_{w \in F} \rho_{i+1}(w)$ for all  $F$ minimally enclosing $x$ in $X_{i+1}$.  If $F \cup\{x\}$ does not contain $x_{i+1}$, there is nothing to check as the condition is satisfied by induction hypothesis.  Similarly, if $F = \{x\}$ the condition is trivially satisfied.  Otherwise, there are a number of cases to be considered.
  \begin{itemize}
  \item  If $x = x_{i+1}$ and $y$ and $z$ are the two children of $x_{i+1}$ then $F$ can be partitioned into two sets $F_1$ and $F_2$ minimally enclosing $y$ and $z$ respectively.  By induction hypothesis we have $\rho_i(y) = \sum_{w \in F_1} \rho_i(w)$ and  $\rho_i(z) = \sum_{w \in F_2} \rho_i(w)$ and so $\rho_{i+1}(x) = \sum_{w \in F} \rho_{i+1}(w)$ as desired. 
  \item If $x = x_{i+1}$ and $y$ is the parent and $z$ the sibling of $x_{i+1}$, then $F \cup \{z\}$ minimally encloses $y$ and so by induction hypothesis  $\rho_i(y) = \rho_i(z) + \sum_{w \in F} \rho_i(w)$.  Thus, setting $\rho_{i+1}(x_{i+1}) = \rho_i(y) - \rho_i(z)$ gives the desired result.
  \item If $x_{i+1} \in F$ and $y$ and $z$ are the two children of $x_{i+1}$ we have that $F\setminus \{x_{i+1}\} \cup \{y,z\}$ minimally encloses $x$ and we have $\rho_{i+1}(x) = \rho_i(x) = \rho_i(y) + \rho_i(z) + \sum_{w \in F \setminus \{x_{i+1}\}} \rho_i(w) = \rho_{i+1}(x_{i+1}) + \sum_{w \in F \setminus \{x_{i+1}\}} \rho_i(w) = \sum_{w \in F} \rho_{i+1}(w).$
  \item  If $x_{i+1} \in F$ and $y$ is the parent and $z$ the sibling of $x_{i+1}$, then it must be that $z \in F$ and also that $F \setminus \{z,x_{i+1}\} \cup \{y\}$ minimally encloses $x$.  From these two facts it is easily seen that $\rho_{i+1}(x) = \rho_i(x) = \sum_{w \in F \setminus \{z,x_{i+1}\} \cup \{y\}}\rho_i(w) = \rho_i(y) - \rho_I(z)+ \sum_{w \in F \setminus \{x_{i+1}\} }\rho_i(w)  = \rho_{i+1}(x_{i+1}) + \sum_{w \in F \setminus \{x_{i+1}\} }\rho_i(w)  = \sum_{w \in F}\rho_{i+1}(w).$
  \end{itemize}

  The only case remaining is if both children of $x_{i+1}$, say $y_1$ and $z_1$ as well as its parent, say $y_2$ and its sibling, say $z_2$ are all in $X_i$.   Then we can consistently define $\rho_{i+1}$ as before as long as $\rho_i(y_1) + \rho_i(z_1) = \rho_i(y_2) - \rho_i(z_2)$.  But, since $\{y_1,z_1,z_2\}$ minimally encloses $y_2$, this is guaranteed by the induction hypothesis.
 \end{proof}

The situation we are often interested in is when we have a consistent function $\rho: X \ra \Zp$ and we wish to extend it to a consistent function $\sigma: Y \ra \Zp$ for some $Y \supseteq X$ in a \emph{minimal} fashion.  That is to say, we want to set  $\sigma$ to be $0$ wherever possible.  To this end, we make the following definition.

\begin{definition}\label{def:free}
  Suppose $X,Y \subseteq V$ with $X \subseteq Y$ and $\rho: X \ra \Zp$ is a consistent function.  We say that $y \in \cl{Y}$ is \emph{free} over $\rho$ if for every $S \subseteq \cl{Y}$ that is closed and connected with head $x$ and frontier $F$ and $y \in F \cup \{x\}$: either \emph{(i)}  $\rho(z) = 0$ for each $z \in X \cap ( F \cup \{x\})$  ; or \emph{(ii)} there is a $w \in S \cap X$ such that $w \not\in F \cup \{x\}$.
\end{definition}

We now show that a consistent $\rho$ on the set $X$ can be extended to
a consistent offset function on $Y  \supseteq X$ which is zero on all
elements that are free over $\rho$.

\begin{proposition}\label{prop:forced}
  Let $X,Y \subseteq V$ with $X \subseteq Y$ and $\rho: X \ra \Zp$ be a consistent function.  Suppose $Z \subseteq \cl{Y}$ is the set of elements in $\cl{Y}$ which are free over $\rho$ and $\sigma: Z \ra \Zp$ the function that takes all such elements to $0$.  Then $\rho \cup \sigma$ is consistent.
\end{proposition}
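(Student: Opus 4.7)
The plan is to verify the condition of Proposition~\ref{prop:non-closed} for $\tau := \rho \cup \sigma$. First, I would check well-definedness: if $y \in X \cap Z$, applying the definition of free to $y$ with the singleton closed connected set $S = \{y\}$ (head $y$, frontier $\{y\}$) makes alternative (ii) vacuous and forces (i), yielding $\rho(y) = 0 = \sigma(y)$. So $\tau \colon X \cup Z \to \Zp$ is a well-defined function.

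The main task is: for every $F \subseteq X \cup Z$ and $x \in X \cup Z$ with $F$ minimally enclosing $x$, to show $\tau(x) = \sum_{y \in F}\tau(y)$. I would prove this by strong induction on $|S|$, where $S = \cl{F \cup \{x\}}$ is the closed connected set with head $x$ and frontier $F$. If $F \cup \{x\} \subseteq X$, the equation is immediate from the consistency of $\rho$ via Proposition~\ref{prop:non-closed}. Otherwise, pick $y_0 \in (F \cup \{x\}) \cap Z$ and apply the free definition to $y_0$ with $(S,x,F)$. In case (i), $\rho$ vanishes on $X \cap (F \cup \{x\})$; combined with $\sigma \equiv 0$ on $Z \setminus X$ this gives $\tau \equiv 0$ on $F \cup \{x\}$, so both sides vanish.

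The delicate case is (ii), where there is $w \in S \cap X$ with $w \notin F \cup \{x\}$: thus $w$ is a strictly interior $X$-vertex of $S$, lying on some path from $x$ to an $f \in F$ but not at its endpoints. Then $w$ is not a leaf of $T$, and closure of $S$ forces both children of $w$ into $S$. Let $S_w$ be the set consisting of $w$ together with its descendants in $S$: this is closed connected with head $w$, frontier $F_w := F \cap S_w$, and size at least three. Let $S' := (S \setminus S_w) \cup \{w\}$: this is closed connected with head $x$ and frontier $F' := (F \setminus F_w) \cup \{w\}$. Both $S_w$ and $S'$ are strictly smaller than $S$, and their frontier-plus-head sets lie in $X \cup Z$ (using $w \in X$). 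The induction hypothesis applied to $S_w$ gives $\rho(w) = \sum_{y \in F_w}\tau(y)$, and applied to $S'$ gives $\tau(x) = \sum_{y \in F \setminus F_w}\tau(y) + \tau(w) = \sum_{y \in F \setminus F_w}\tau(y) + \rho(w)$; summing yields $\tau(x) = \sum_{y \in F}\tau(y)$ as required.

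The main obstacle is case (ii): translating ``$w$ is an interior $X$-element'' into a genuine decomposition of $S$ into strictly smaller closed connected subsets with the correct heads and frontiers is where the tree-geometry bookkeeping is concentrated; once that is established, the arithmetic collapses immediately.
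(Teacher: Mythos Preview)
Your argument is correct and follows essentially the same route as the paper's proof: both use Proposition~\ref{prop:non-closed}, both split on whether alternative (i) or (ii) of Definition~\ref{def:free} applies to a free element in $F\cup\{x\}$, and both decompose $S$ at the interior witness $w\in X$ into the subtree $S_w$ with frontier $F_w$ and the complementary piece with frontier $(F\setminus F_w)\cup\{w\}$. The only cosmetic differences are that the paper phrases it as a minimal-counterexample contradiction (and hence needs a sub-case split on which of the two pieces still fails) whereas your strong induction on $|S|$ applies the hypothesis to both pieces directly, and that you explicitly verify well-definedness of $\rho\cup\sigma$ on $X\cap Z$, which the paper leaves implicit.
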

\begin{proof}
  Assume, for contradiction, that $\rho' := \rho \cup \sigma$ is not
  consistent.  Then, by Proposition~\ref{prop:non-closed}, there is a
  closed connected set $S \subseteq X \cup Z$ with frontier $F$ and head $x$ such that $\rho'(x) \neq \sum_{y \in F} \rho'(y)$.   Let $S$ be a minimal such set.  Note that $F \cup \{x\}$ must contain an element of $Z$, since $\rho$ is assumed to be consistent.  Then, it must be the case that $\rho(z) \neq 0$ for some $z \in X \cap ( F \cup \{x\})$ for otherwise by definition we would have $\rho'(z) = 0$ for all $z \in F \cup\{x\}$ and so $\rho'(x) = \sum_{y \in F} \rho'(y)$.
Then, by Definition~\ref{def:free}, there is a  $w \in S \cap X$ such that $w \not\in F \cup \{x\}$.  Let $F' \subseteq F$ be the unique subset of $F$ that minimally encloses $w$.  We then have two cases: either $\rho'(w) = \rho(w) \neq \sum_{y \in F'} \rho'(y)$ or $\rho'(w) = \rho(w) = \sum_{y \in F'} \rho'(y)$.  In the first case, let $S' = \cl{F' \cup\{w\}}$ and in the second, let $S' = \cl{F\setminus F' \cup \{w,x\}}$.  In either case we have $S'$ is a proper closed connected subset of $S$ with the property that $\rho'$ at its head is not the sum of the values of $\rho'$ at its fronitier, contradicting the minimality of $S$.
\end{proof}

We now establish the main combinatorial property of consistent offset
functions which will enable us to construct Duplicator winning
strategies for the graph move of the game.

\begin{proposition}\label{prop:sequence}
  Say we are given a set $X \subseteq V$ along with a consistent $\rho: X \ra \Zp$, and let $Y_1,\ldots,Y_r \subseteq V$ be sets with $|Y_i| \leq s$ for all $i$.  Then, there is a sequence of functions $\sigma_i: Y_i \rightarrow \Zp$ such that:
  \begin{enumerate}
  \item for each $i$ with $1\leq i < r$, $\rho \cup \sigma_i \cup \sigma_{i+1}$ is consistent; and 
  \item for any $i$ and  $y \in Y_i$, if $\sigma_i(y) \neq 0$ then there is some $x\in X$ with $\rho(x) \neq 0$ and $\hgt{y} \geq  \hgt{x} - 2(|X| + s)$.
  \end{enumerate}
\end{proposition}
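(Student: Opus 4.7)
I propose to build the sequence $\sigma_1,\ldots,\sigma_r$ inductively, using Proposition~\ref{prop:forced} at each step. For the base step, apply Proposition~\ref{prop:forced} to the pair $(X,\,X\cup Y_1)$: this produces a consistent extension of $\rho$ to $X\cup Z_1$ vanishing on the free set $Z_1\subseteq\cl{X\cup Y_1}$, which is then extended consistently to the remainder of $Y_1$ via Proposition~\ref{prop:non-closed}, and $\sigma_1$ is taken as the restriction of this extension to $Y_1$. For the inductive step, given that $\rho\cup\sigma_i$ is a consistent function on $X\cup Y_i$, apply Proposition~\ref{prop:forced} to $(X\cup Y_i,\,X\cup Y_i\cup Y_{i+1})$ and extend by Proposition~\ref{prop:non-closed} to define $\sigma_{i+1}$ on $Y_{i+1}$ with $\rho\cup\sigma_i\cup\sigma_{i+1}$ consistent. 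Condition~(1) of the proposition is then secured immediately by construction.

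For condition~(2), suppose $\sigma_i(y)\neq 0$. By construction and Definition~\ref{def:free}, $y$ is not free over $\rho\cup\sigma_{i-1}$ in $\cl{X\cup Y_{i-1}\cup Y_i}$, so there is a closed connected set $S$ with head $x_S$ and frontier $F$ such that $y\in F\cup\{x_S\}$, $(X\cup Y_{i-1})\cap S\subseteq F\cup\{x_S\}$, and some $z\in (X\cup Y_{i-1})\cap(F\cup\{x_S\})$ satisfies $(\rho\cup\sigma_{i-1})(z)\neq 0$. The key combinatorial inputs are the frontier-height bound established earlier, $|F|>\hgt{S}$, which gives $\hgt{z}-\hgt{y}\leq \hgt{S}<|F|$ since both $y$ and $z$ lie in $F\cup\{x_S\}\subseteq S$, together with the bound $|F|\leq |X|+|Y_{i-1}|+|Y_i|\leq |X|+2s$. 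The latter follows from the observation that $S=\cl{F}$ so that the elements of $F$ are leaves of the subtree $S$, and a short counting argument shows that the leaves of any closed set $\cl{W}$ are controlled by $|W|$. When $z\in X$, we conclude directly with $x:=z$ and $\hgt{x}-\hgt{y}<|X|+2s\leq 2(|X|+s)$.

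The main obstacle is the case $z\in Y_{i-1}$ with $\sigma_{i-1}(z)\neq 0$: naively invoking the inductive hypothesis would add roughly $|X|+2s$ per step to the drift in height and destroy the uniform bound $2(|X|+s)$. To handle this, I plan to combine the witness $S$ for $y$ with the witness $S'$ supplied by the induction hypothesis for the non-freeness of $z$, which ultimately traces back to some $x\in X$ with $\rho(x)\neq 0$. The merged set $S''=\cl{S\cup S'}$ is closed and connected (since $S\cap S'\ni z$), lies in the same closure $\cl{X\cup Y_{i-1}\cup Y_i}$, and can be modified so that its boundary contains both $y$ and $x$; the same frontier bound then gives $\hgt{x}-\hgt{y}<|X|+2s\leq 2(|X|+s)$ without any accumulation across steps. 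Carrying out this merging argument while verifying that $y$ and $x$ indeed end up on the frontier of the merged witness is the crux of the proof.
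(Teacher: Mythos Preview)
Your inductive construction certainly secures condition~(1), but the argument for condition~(2) has a genuine gap in the ``merging'' step, and I do not see how to repair it within a purely forward induction.

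Concretely: the witness $S'$ for the non-freeness of $z\in Y_{i-1}$ is a closed connected subset of $\cl{X\cup Y_{i-2}\cup Y_{i-1}}$, not of $\cl{X\cup Y_{i-1}\cup Y_i}$, so your claim that $S''=\cl{S\cup S'}$ lies in the latter closure is false. More importantly, the frontier of $S''$ can pick up elements of $F_{S'}\subseteq X\cup Y_{i-2}\cup Y_{i-1}$ that do not lie in $X\cup Y_{i-1}\cup Y_i$, so the bound $|F''|\le |X|+2s$ is unjustified. If the element $x$ you reach via $S'$ is itself only in $Y_{i-2}$ (the inductive hypothesis does not supply a witness set, only the existence of some $x\in X$), you must iterate the merge back through $Y_{i-2},Y_{i-3},\ldots$; each merge can add up to $s$ new frontier elements, and the frontier--height inequality then only yields $\hgt{x}-\hgt{y}<|X|+O(is)$. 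This is exactly the accumulation you were trying to avoid, and no local ``modification so that the boundary contains both $y$ and $x$'' circumvents it: the height span of $S''$ really can be of order $is$.

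The paper's proof is structurally different and avoids accumulation by \emph{looking ahead}. Before defining any $\sigma_i$, it introduces auxiliary barrier sets $H_i\subseteq V$ whose definition depends on the entire sequence $Y_1,\ldots,Y_r$ (a node $u$ lies in $H_i$ when, roughly, some future $Y_j\cup X$ with $j\ge i$ is concentrated in $T(u)$ within the subtree of the grandparent of $u$, and no earlier $Y_{j'}$ with $i\le j'<j$ already achieves this). One then sets $\rho_i:=\rho\cup 0|_{H_i}$, shows that $\rho_{i+1}\cup\sigma_i$ remains consistent, and defines $\sigma_{i+1}$ so that it vanishes below every element of $H_{i+1}$. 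The height bound in~(2) then follows by a direct pigeonhole on the path from $y$ upward: if the path were longer than $2(|X|+s)$ without meeting a nonzero $x\in X$, three consecutive ancestors would contain the same elements of $Y_i\cup X$, forcing a barrier in $H_i$ above $y$ and hence $\sigma_i(y)=0$. The look-ahead in the definition of $H_i$ is precisely what makes this bound uniform in $i$; a forward induction using only $\sigma_{i-1}$ does not carry enough information to achieve it.
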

\begin{proof}
  For any $v \in V$, we write $T(v)$ for the subtree of $T$ rooted at $v$.  We now define, for each $i$ a set $H_i \subseteq V$, which we use in defining the function $\sigma_i$.
The set $H_i$ for $1 \leq i \leq r$ is the set of all vertices $u \in V$ such that all of the following conditions hold:
\begin{itemize}
\item $u$ has a grandparent $v$ and no element $x$ of $X$ with non-zero $\rho(x)$ appears in $T(v)$;
\item for some $j \geq i$, all elements of $Y_j\cup X$ that are in $T(v)$ are in $T(u)$; and 
\item for all $j'$ with $i \leq j' < j$, there is no grandchild $w$ of $v$ such that all elements of $Y_{j'}\cup X$ that are in $T(v)$ are in $T(w)$.
\end{itemize}
Note that by construction $H_i$ cannot contain a pair of sibling nodes.

Let $\rho_0 = \rho$ and for each $i  \geq 1$ define $\rho_i: X \cup H_i \ra \Zp$ to be the function such that $\rho_i(x) = \rho(x)$ for $x \in X$ and $\rho_i(h) = 0$ for all $h\in H_i$.  It easily follows from Proposition~\ref{prop:forced} that $\rho_i$ is consistent for all $i$.

We now define the series of functions $\sigma_i$ by induction on $i$ having the following two properties
\begin{enumerate}
\item $\rho_i \cup \sigma_i$ is consistent; and 
\item whenever $y \in T(h)$ for some $h \in H_i$ and $y\in Y_i$, then $\sigma_i(y) = 0$.
\end{enumerate}

For a base case, define $Y_0$ to be the empty set and $\sigma_0$ to be the empty function.  Clearly the above two properties are trivially satisfied.  Assume, by induction that $\sigma_i$ has been defined with the above properties.  We first show that $\eta = \rho_{i+1} \cup \sigma_i$ is consistent.   Suppose, towards a contradiction that it is not.  Then, by Proposition~\ref{prop:consistent} there is an $x \in \dom(\eta) = X \cup Y_i \cup H_{i+1}$ and an $F \subseteq X \cup Y_i \cup H_{i+1}$ that minimally encloses $x$ such that $\eta(x) \neq \sum_{y \in F} \eta(y)$.  If $F \cup \{x\}$ contains no element of $H_{i+1}$, then $\eta(x) = \sum_{y \in F} \eta(y)$ by the consistency of $\rho_i \cup \sigma_i$, so suppose $F \cup \{x\}$ contains an element $h$ of $H_{i+1}$.  We argue that then $h$ is also an element of $H_i$ and therefore again $\eta(x) = \sum_{y \in F} \eta(y)$ by the consistency of $\rho_i \cup \sigma_i$.  Note that any element $h$ of $H_{i+1}$ is an element of $H_i$ unless the grandparent $u$ of $h$ has another grandchild $w \neq h$ such that all elements of $(Y_i\cup X) \cap T(u)$ are in $T(w)$.
Indeed, by definition of $H_{i+1}$, there is a $j \geq i+1$ such that all elements of $Y_j\cup X$ that are in $T(u)$  are in $T(h)$.   This $j$ also witnesses that $h$ is in $H_i$ unless $u$ has another grandchild $w \neq h$ such that all elements of $(Y_i\cup X) \cap T(u)$ are in $T(w)$.  We show by cases that this cannot happen.

\noindent
\emph{Case 1}  $h = x$.  Then, by definition of $H_{i+1}$ there are no elements of $X$ in $T(x)$ that $\rho$ takes to a non-zero element and so in particular none in $F$.  Further, since $H_{i+1}$ cannot contain a pair of sibling nodes and any frontier minimally enclosing $x$ must contain such a pair, we conclude that $F$ contains at least one element of $Y_i\cup X$.    Hence, the grandparent $u$ of $h$ cannot have a grandchild $w$ distinct from $h$ such that $T(w)$ contains $T(u) \cap (Y_i\cup X)$.

\noindent
\emph{Case 2}: $x \not\in H_{i+1}$ is the parent of $h$.   Let $u$ be the parent of $x$.  Since $T(u)$ does not contain any element of $X$ that $\rho$ takes to a non-zero element and $x \not \in H_{i+1}$ we conclude that $x \in Y_i\cup X$.  Thus, there is no grandchild $w$ of $u$ such that $T(w)$ contains $T(u) \cap (Y_i\cup X)$.

\noindent
\emph{Case 3:}  $x \not\in H_{i+1}$ is not the parent, but an ancestor of $h$.  Then, consider $y$, the sibling of $h$ and $z$ the sibling of the parent of $h$.  We know $y \not\in H_{i+1}$ since two sibling nodes cannot be in $H_{i+1}$.  Since $F \cap T(y)$ must enclose $y$, we conclude that at least one element of $F \cap T(y)$ is not in $H_{i+1}$ and so is in $Y_i\cup X$.  By the same argument, there is an element of $Y_i\cup X$ in $T(z)$.  Thus, the grandparent $u$ of $h$ does not have a unique grandchild $w$ such that $T(w)$ contains $T(u) \cap (Y_i\cup X)$.

Thus, $\eta$ is consistent.   Let $Z$ be the set of elements in $C = \cl{X \cup Y_i \cup Y_{i+1} \cup H_{i+1}}$ that are free over $\eta$.  Then by Proposition~\ref{prop:forced} there is a consistent extension $\eta': C \ra \Zp$ of $\eta$ such that $\eta'(x) = 0$ for all $x \in Z$.  We define $\sigma_{i+1}$ to be the restriction of $\eta'$ to $Y_{i+1}$.

We first argue that $\sigma_{i+1}$ satisfies the two properties required inductively.  That $\rho_{i+1} \cup \sigma_{i+1}$ is consistent follows from the consistency of $\eta'$.  For the second, it suffices to show that whenever $y \in T(h) \cap Y_{i+1}$ for some $h \in H_{i+1}$, then $y$ is free over $\eta$.
For any element $ z \in \dom(\eta) \cap T(h)$ we have either $z \in Y_i\cup X$ or $z \in H_{i+1}$.  Note that $T(h)$ contains no element of $X$ that gets taken to a non-zero element of $\rho$ by definition of $H_{i+1}$ and so in the latter case, $\eta'(z) = 0$ again by definition.  In the former case, we have an element of $X\cup Y_i$ in $T(h)$ and so $h$ must be in $H_i$, hence $\eta(z) = 0$ by induction hypothesis.  So, suppose $y \in T(h) \cap Y_{i+1}$ and let $S$ be a closed connected subset of $C$ with frontier $F$ and head $x$ such that $y \in F \cup \{x\}$.  If $x$ is not in $T(h)$, then $h \in \dom(\eta) \cap S$ and $h \not\in F \cup \{x\}$.  On the other hand, if $x \in T(h)$, then $S \subseteq T(h)$ and we have established that $\eta(z) = 0$ for all $z \in T(h) \cap \dom(\eta)$.  This shows that $y$ is free over $\eta$.

Having defined the sequence of functions $\sigma_i$, we argue that
they satisfy the two properties required in the proposition.  The
first is immediate from the definition.  Indeed, $\rho \cup \sigma_i
\cup \sigma_{i+1}$ is a restriction to $X \cup Y_i \cup Y_{i+1}$ of
the consistent function $\eta'$ defined at stage $i+1$ of the
construction above.

For the second property, suppose towards a contradiction that there is
a $y \in Y_i$ with $\sigma_i(y) \neq 0$ and such that 
the shortest
path from $y$ to an ancestor $z$ with $T(z)$ containing some $x \in X$ where $\rho(x)$ is not zero, is of
length greater than $2(s + |X|)$.  Let $y=u_0,u_1,\ldots,u_m$ be this path.
Since, $|Y_i| \leq s$, there must be a $j < m-2$ such that
$T(u_j)$, $T(u_{j+1})$ and $T(u_{j+2})$ all contain the same elements
of $Y_i \cup X$.  Since $T(u_{j+2})$ contains no element $w$ of $X$ where $\rho(w)$ is not zero, we conclude
that $u_j$ is in $H_i$.  This implies that $\sigma_i(u) = 0$ for all
$u \in T(u_j)$ and so in particular for $y$, giving us the contradiction.

\end{proof}

\section{Inexpressibility of PSP in \texorpdfstring{\textsf{LREC}$_=$}{Lg}}
\label{sec:result}

In this section we prove the main result of the paper. We make use of
the tree properties established in Section~\ref{sec:extender} to show
further combinatorial results about the specific subclass of
semi-graphs that arise as interpretations from the structures defined
in Definition~\ref{def:structures}.
From this we are able to arrive at a winning strategy for Duplicator.

Let $T = (V,E)$ be a complete binary tree with $N = 2^n$ leaves, $p$
be a prime number, and fix a subset $\beta \subseteq V$.  Intuitively, these correspond to the parameters of the interpretation and remain fixed in our game argument.
For $m>0$ and any set $X$ of size $m$ let
$\Ncal(X)$ be the set $[m]$. For $X\subseteq V$, let $\Dcal(X)$ denote
the set $(X\times\gf{p})\cup \Ncal(V\times\gf{p})$. Fix a semi-graph
$G = (W, F, \sim)$ with $W \subseteq \Dcal(V)^c$ for some constant $c$
and $F$ the edge relation of $G$. Recall that $\cong$ denotes the
symmetric reflexive transitive closure of $\sim$, and $[G]$ denotes
the graph obtained from $G$ by taking the quotient with respect to
this relation.  For any $\vec{a} \in W$, we write $\tilde{a}$ for the set of elements that occur in $\vec{a}$ and $V(\tilde{a})$ for its projection on $V$, i.e.\ the set $\{v \in V \mid (v,d) \in \tilde{a} \text{ for some } d \}$.  For any set closed set $X \subseteq V$, we use the term \emph{frontier} of $X$ to mean the union of the frontiers of all the connected components of $X$.

 For $X\subseteq V$, and consistent $\rho:X\to \gf{p}$, let $\Zcal(\rho)$ be the zero-locus of $\rho$, and $\nh{\rho}$ be equal to $\mnh{X\setminus\Zcal(\rho)}-1$. For a sequence of sets $Y_1,Y_2,\ldots, Y_t\subseteq V$, let $\lift{\rho}{Y_1, Y_2, \ldots, Y_t}$ be the set of all sequences of functions $\sigma_i: Y_i\to\gf{p}$ which extend $\rho$ and satisfy the conditions of Proposition~\ref{prop:sequence}.
 Recall that $\rho$ induces a bijection $\bij(\rho) = f:\Dcal(X)\to \Dcal(X)$ which is the function where $f(v,a) := (v,a+\rho(v))$, when $(v,a) \in X\times \gf{p}$, and $f$ is the identity otherwise.  For either structure $\Afrak \in \{G,[G]\}$, we say $\rho$ \emph{induces a partial isomorphism} over $\Afrak$ if $f$ is a partial isomorphism from $\Afrak$ to itself. Moreover, we say $\rho$ \emph{induces a liftable isomorphism} over $\Afrak$ if for all sequences $Y_1,Y_2,\ldots Y_t$ with $|Y_i| \leq c$, there is some $(\sigma_1,\sigma_2,\ldots,\sigma_t) \in \lift{\rho}{Y_1, Y_2, \ldots, Y_t}$ such that $\sigma_i\cup\sigma_{i+1}$ induces a partial isomorphism over $\Afrak$ for $i<t$. For all $d$ let $f^d$ denote $f$ composed with itself $d$ times.
 
\begin{proposition}
\label{degree}
For $X \subseteq V$, function $\rho:X\to \gf{p}$ inducing a partial
isomorphism $f = \bij(\rho)$ over $[G]$ and tuple $\vec{a}\in
\Dcal(X)^c \subseteq W$, if there are $d_1,d_2 \in \gf{p}$ with
$f^{d_1}(\vec{a}) \cong f^{d_2}(\vec{a})$ then $f(\vec{a})\cong \vec{a}$.
\end{proposition}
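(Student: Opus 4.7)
The plan is to exploit that $f = \bij(\rho)$ generates a cyclic group of order dividing $p$ inside the automorphism group of $[G]$, and then use the invertibility of any nonzero element of $\gf{p}$. First I would record the explicit form of the iterates: since $\rho$ takes values in $\gf{p}$, for $(v,a) \in X \times \gf{p}$ one has $f^d(v,a) = (v, a + d\rho(v))$ for every integer $d$, while $f^d$ is the identity on $\Ncal(V\times\gf{p})$. In particular $f^p$ is the identity on $\Dcal(X)$ and on all of $\Dcal(X)^c$, so $f^d$ depends only on $d \bmod p$ and is unambiguously defined for $d \in \gf{p}$. Being an iterate of the partial isomorphism $f$ over $[G]$, every $f^d$ is itself a partial isomorphism over $[G]$, and hence preserves $\cong$.

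The content of the proposition is the case $d_1 \neq d_2$ (if $d_1 = d_2$ the hypothesis is vacuous). Set $d := d_1 - d_2 \pmod p$, so $d \not\equiv 0$. Applying the partial isomorphism $f^{p - d_2}$ to both sides of the hypothesis $f^{d_1}(\vec{a}) \cong f^{d_2}(\vec{a})$ and using that $f^p$ is the identity yields $f^{d}(\vec{a}) \cong \vec{a}$.

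Now I iterate. Since $f^d$ preserves $\cong$, applying it to $f^d(\vec{a}) \cong \vec{a}$ gives $f^{2d}(\vec{a}) \cong f^d(\vec{a}) \cong \vec{a}$, and by a straightforward induction $f^{kd}(\vec{a}) \cong \vec{a}$ for every integer $k \geq 0$. Because $p$ is prime and $d \not\equiv 0 \pmod p$, the element $d$ is invertible in $\gf{p}$, so there exists $k$ with $kd \equiv 1 \pmod p$. From $f^p = \mathrm{id}$ on $\Dcal(X)^c$ it follows that $f^{kd} = f^{kd \bmod p} = f$, and therefore $f(\vec{a}) \cong \vec{a}$, as required.

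There is no real obstacle in this argument: once the prime-order cyclic structure of the subgroup $\langle f \rangle$ is noted, the proposition reduces to the observation that any nonzero element of $\gf{p}$ generates all of $\gf{p}$. The only point worth flagging is the essential use of primality of $p$ for the invertibility of $d_1 - d_2$, which is why the construction in Definition~\ref{def:structures} fixes $p$ to be prime.
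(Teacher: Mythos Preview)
Your proof is correct and follows essentially the same approach as the paper: both arguments exploit that the iterates $f^d$ form a cyclic group of prime order $p$ acting on $\cong$-classes, so that any nontrivial coincidence forces the whole orbit to collapse. The paper phrases this via the kernel of the $\gf{p}$-action on $C = \{[f^d(\vec{a})]\}$ being trivial or all of $\gf{p}$, while you unpack the same idea explicitly by computing $f^d(v,a) = (v, a + d\rho(v))$, reducing to $f^d(\vec{a}) \cong \vec{a}$ for some $d \not\equiv 0$, and then using invertibility of $d$ in $\gf{p}$ to reach $f(\vec{a}) \cong \vec{a}$; the two arguments are the same content at different levels of abstraction.
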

\begin{proof}
Let $C = \{[f^d(\vec{a})] \mid d \in \gf{p}\}$ denote the set of
$\cong$-equivalence classes of $f^d(\vec{a})$.  By the fact that $f$
is a partial isomorphism over $[G]$, we have an action of the group
$\gf{p}$ on $C$ given by $[f^d(\vec{a})]^i = [f^{d+1}(\vec{a})]$ for $i
\in \gf{p}$.  Since $p$ is prime, $\gf{p}$ has no non-trivial
subgroups and so the kernel of this action is either trivial, in which
case $f^{d_1}(\vec{a}) \not\cong f^{d_2}(\vec{a})$ for all distinct
$d_1$ and $d_2$ or the kernel of the action is all of $\gf{p}$, in
which case $f^d(\vec{a}) \cong \vec{a}$ for all $d$.
\end{proof}

\begin{lemma}
\label{isomorph}
For $X \subseteq V$ and $\rho:X \to \gf{p}$, if $\rho$ induces a liftable isomorphism over $G$, then $\rho$ induces a partial isomorphism over $[G]$.
\end{lemma}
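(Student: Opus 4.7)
The plan is to verify the two conditions needed for $\bij(\rho)$ to induce a partial isomorphism of $[G]$: that the induced action on $\cong$-equivalence classes of tuples in $W \cap \Dcal(X)^c$ is well-defined, and that it preserves the edge relation $[F]$. The key leverage is that an input tuple $\vec{u} \in W \cap \Dcal(X)^c$ has all its $V$-coordinates in $X$, whereas intermediate tuples arising along a $\sim$-chain need not.

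For well-definedness, suppose $\vec{u}, \vec{v} \in W \cap \Dcal(X)^c$ with $\vec{u} \cong \vec{v}$ in $G$. Fix a $\sim$-chain $\vec{u} = \vec{u}_0, \vec{u}_1, \ldots, \vec{u}_k = \vec{v}$ whose consecutive pairs are related by $\sim$ or its converse, and set $Y_i := V(\tilde{u}_i)$. Each $Y_i$ has size at most $c$, and the endpoint sets satisfy $Y_0, Y_k \subseteq X$. Invoke the liftable-isomorphism hypothesis on $(Y_0, \ldots, Y_k)$ to obtain $(\sigma_0, \ldots, \sigma_k) \in \lift{\rho}{Y_0, \ldots, Y_k}$ such that $\bij(\sigma_i \cup \sigma_{i+1})$ is a partial isomorphism of $G$ for each $i < k$. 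Since $\sigma_0$ extends $\rho$ and $Y_0 \subseteq X$, it must equal $\rho|_{Y_0}$, whence $\bij(\sigma_0)(\vec{u}) = \bij(\rho)(\vec{u})$; symmetrically $\bij(\sigma_k)(\vec{v}) = \bij(\rho)(\vec{v})$. The consistency condition forces $\sigma_i$ and $\sigma_{i+1}$ to agree on $Y_i \cap Y_{i+1}$, so $\bij(\sigma_i \cup \sigma_{i+1})(\vec{u}_i) = \bij(\sigma_i)(\vec{u}_i)$; applying the partial-isomorphism property along the chain yields
$$\bij(\rho)(\vec{u}) = \bij(\sigma_0)(\vec{u}_0) \cong \bij(\sigma_k)(\vec{u}_k) = \bij(\rho)(\vec{v})$$
as desired. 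Injectivity of the induced map on classes follows by applying the same reasoning to $-\rho$, which is also consistent and induces a liftable isomorphism over $G$.

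For edge preservation, given $([\vec{u}], [\vec{v}]) \in [F]$ with $\vec{u}, \vec{v} \in W \cap \Dcal(X)^c$, pick representatives $\vec{u}' \cong \vec{u}$ and $\vec{v}' \cong \vec{v}$ with $(\vec{u}', \vec{v}') \in F$, and assemble a single combined sequence of subsets of $V$ consisting of a $\sim$-chain from $\vec{u}$ to $\vec{u}'$, then $V(\tilde{u}')$ and $V(\tilde{v}')$ consecutively, then a $\sim$-chain from $\vec{v}'$ to $\vec{v}$. Apply liftable isomorphism to this combined sequence, producing extensions $\sigma_i$ at each stage. The endpoint extensions coincide with $\rho$ on their domains as before, and the chain argument applied to each $\sim$-segment gives $\bij(\rho)(\vec{u}) \cong \bij(\sigma_a)(\vec{u}')$ and $\bij(\sigma_{a+1})(\vec{v}') \cong \bij(\rho)(\vec{v})$ in $G$, where $\sigma_a, \sigma_{a+1}$ are the extensions at $\vec{u}', \vec{v}'$. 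At the $F$-step the partial-isomorphism property yields $(\bij(\sigma_a)(\vec{u}'), \bij(\sigma_{a+1})(\vec{v}')) \in F$. Combining these relations gives $([\bij(\rho)(\vec{u})], [\bij(\rho)(\vec{v})]) \in [F]$, and preservation in the reverse direction again follows from the same argument applied to $-\rho$.

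The main technical obstacle is the combinatorial bookkeeping in the edge case: one must verify that $|Y_i| \leq c$ throughout (automatic, since each $Y_i = V(\tilde{u}_i)$ for a $c$-tuple $\vec{u}_i$), that the bijection $\bij(\sigma_i \cup \sigma_{i+1})$ really does restrict to $\bij(\sigma_i)$ on $\vec{u}_i$ (which uses the consistency condition from Proposition~\ref{prop:sequence}), and that the $F$-step is correctly located within the combined sequence so that the partial-isomorphism property delivers the required edge. The crucial structural fact powering the entire argument is that the endpoints of any chain, being in $\Dcal(X)^c$, force the boundary extensions to reduce to $\rho$ itself, so the potentially complicated intermediate extensions collapse back to $\bij(\rho)$ at the boundary.
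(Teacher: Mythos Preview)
Your proof is correct and follows essentially the same approach as the paper's. The paper unifies the well-definedness and edge-preservation cases by treating a generic relation $R \in \{F, \sim\}$ and showing $f$ preserves $[R] \cap S^2$ via the same concatenated-chain argument you use, while you separate the two cases explicitly; both proofs hinge on the same observation that the endpoint tuples lie in $\Dcal(X)^c$, forcing the lifted extensions there to agree with $\rho$.
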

\begin{proof}
Let $f:= \bij(\rho)$, and let $S=\Dcal(X)^c\cap W$. We show that $f$ preserves the relations $[R]\cap S^2$ where $R$ is either one of $F$ or $\sim$.  To this end, let $(\vec{a}_1,\vec{a}_2) \in [R]\cap S^2$ and we show $(f(\vec{a}_1),f(\vec{a}_2))$ is in $[R]\cap S^2$ as well.  First we fix $(\vec{b}_1,\vec{b}_2) \in R$ such that $\vec{b_i} \cong \vec{a}_i$ for $i \in \{1,2\}$, which must exist by our choice of $\vec{a}_1,\vec{a}_2$.  For each $i$ there is then a sequence $\vec{d}_{i,1} (= \vec{a}_i),\vec{d}_{i,2},\ldots,\vec{d}_{i,s_i} (= \vec{b}_i)$ such that $\vec{d}_{i,j} \sim \vec{d}_{i,j+1}$ or $\vec{d}_{i,j+1} \sim \vec{d}_{i,j}$.  Let $Y_j$ be the set $V(\tilde{d}_{1,j})$ if $1 \leq j \leq s_1$ or $V(\tilde{d}_{2,s_1+s_2-j})$ if $s_1 < j  \leq s_1 + s_2$.  In other words, the sequence of $Y_j$ forms the vertex sets of the symmetric $\sim$-path from $\vec{a}_1$ to $\vec{b}_1$, followed by the symmetric $\sim$-path from $\vec{b}_2$ to $\vec{a}_2$.
Note that $|Y_j| \leq c$ for all $j$.  By the assumption that $\rho$ is a liftable isomorphism, there is a sequence $\sigma_j: Y_j\to \gf{n}$  in $\lift{\rho}{Y_1,Y_2,\ldots,Y_s}$ and $f_j := \bij(\rho\cup\sigma_j\cup\sigma_{j+1})$ defines a partial isomorphism over $G$ for all $j<s_1+s_2$.  This means that for $i \leq k$ and $j < s_1+s_2$ with $j\neq s_1$, $f_j(\vec{c}_{i,j}) \sim f_j(\vec{c}_{i,j+1})$ or $f_j(\vec{c}_{i,j+1}) \sim f_j(\vec{c}_{i,j})$, and $(f_{s_1}(\vec{b}_1),f_{s_1+1}(\vec{b}_2)) \in R$. Thus, we conclude that $(f(\vec{a}_1),f(\vec{a}_2)) \in [R]$, as required. Since all of the above holds for $f^{-1}$ as well,  it follows immediately that $f$ is a well defined partial isomorphism over $[G]$.
\end{proof}

For set $X\subseteq V$, function $\rho:X\to\gf{p}$, and node $u \in X$, we say $\rho$ is \emph{zero bar $u$} if $X\setminus\{u\}\subseteq \Zcal(\rho)$. We further say $\rho$ is \emph{a spike at $u$}, if $\rho$ is zero bar $u$ and $\rho(u) \neq 0$.

For $\vec{a} \in W$, $M$ the frontier of $\cl{V(\tilde{a})}$, and $u \in M\setminus \beta$, we say $u$ is \emph{free} in $[\vec{a}]$ if for all functions $\rho:\beta\cup M\to \gf{p}$ that are zero bar $u$, we have that $\rho$ is consistent and $[\bij(\cl{\rho}|_{V(\tilde{a})})](\vec{a}) \cong \vec{a}$. We say $u$ is \emph{bounded} in $[\vec{a}]$ otherwise. Note that any two functions $\rho,\sigma: \beta\cup M\to\gf{p}$ that are spikes at $u$, defining $f := \bij(\cl{\rho})$ and $g:= \bij(\cl{\sigma})$, we have that $g = f^d$ for some $d$. Thus, by Proposition~\ref{degree}, if $\rho$ is consistent and induces a partial isomorphism over $[G]$, then $u$ is free in $[\vec{a}]$ if and only if $f(\vec{a}) \cong \vec{a}$.

Say we have a sequence $\alpha$ of nodes $\vec{a}_1,\vec{a}_2,\ldots,\vec{a}_t$ in $W$, some $i\leq t$, node $u$ bounded in $[\vec{a}_i]$, and $M$ the frontier of $\cl{V(\tilde{a}_i)}$. We say $\alpha$ \emph{strongly bounds} $u$ at $i$ if for consistent $\rho:\beta\cup M\to\gf{p}$ that are zero bar $u$, all $j \in (i, t]$, and $\sigma:V(\tilde{a}_j)\to\gf{p}$ the function which takes all elements to zero, we have that either $\rho\cup\sigma$ is inconsistent or $\cl{\rho\cup\sigma}$ induces a liftable isomorphism over $[G]$. Finally, we say $\alpha$ is a \emph{strongly-bounding path} if both of the following are satisfied \begin{enumerate}
    \item $(\vec{a}_{i},\vec{a}_{i+1}) \in [F]$ for $i < t$ (i.e. the sequence is a directed path); and
    \item for all $i \leq t$, and node $u$ bounded in $[\vec{a}_i]$, $\alpha$ strongly bounds $u$ at $i$
\end{enumerate}The following is a key combinatorial property of strongly-bounding paths. 

\begin{proposition} \label{ihestablish} For a strongly-bounding path $\vec{a}_1,\vec{a}_2,\ldots,\vec{a}_s$ in $W$, $M$ the frontier of $\cl{V(\tilde{a}_1)}$, and node $u$ bounded in $[\vec{a}_1]$, if $u$ is not contained in $\cl{(M\setminus\{u\})\cup\beta \cup V(\tilde{a}_s)}$ then there is some $i>1$ such that $|[F]\vec{a}_i| \geq p$.
\end{proposition}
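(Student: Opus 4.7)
We proceed by contradiction.  Suppose $|[F]\vec{a}_i| < p$ for every $i > 1$; we will show that $u$ must be free in $[\vec{a}_1]$, contradicting its bounded status.  Since the hypothesis implies $u \notin \cl{(M \setminus \{u\}) \cup \beta}$, every zero-bar-$u$ function at $u$ over $\beta \cup M$ is automatically consistent, so the bounded-ness of $u$ in $[\vec{a}_1]$ must be witnessed by a consistent spike $\rho : \beta \cup M \ra \gf{p}$ at $u$ with $\bij(\cl{\rho}|_{V(\tilde{a}_1)})(\vec{a}_1) \not\cong \vec{a}_1$.  Fix such a $\rho$, and let $\sigma : V(\tilde{a}_s) \ra \gf{p}$ be the zero function.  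The full hypothesis $u \notin \cl{(M \setminus \{u\}) \cup \beta \cup V(\tilde{a}_s)}$ prevents $u$ from being forced by the enlarged zero-domain, so $\rho \cup \sigma$ is still consistent.  The strongly-bounding condition, applied at $i = 1$ with $j = s$, then asserts that $\cl{\rho \cup \sigma}$ induces a liftable isomorphism over $[G]$.

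Define $f := \bij(\cl{\rho \cup \sigma})$.  The same argument applied to the scaled spikes $d\rho$ for $d \in \gf{p}$ produces $f_d := f^d$, each arising from a liftable isomorphism over $[G]$.  By construction $\cl{\rho \cup \sigma}$ is zero on $V(\tilde{a}_s)$, so $f$ fixes $\vec{a}_s$ and hence $f(\vec{a}_s) \cong \vec{a}_s$; on the other hand $V(\tilde{a}_1) \subseteq \cl{M}$, so on $V(\tilde{a}_1)$ the closure $\cl{\rho \cup \sigma}$ agrees with $\cl{\rho}$, yielding $f(\vec{a}_1) \not\cong \vec{a}_1$.  Invoking the liftable-isomorphism property pairwise along the sequence $V(\tilde{a}_1), \ldots, V(\tilde{a}_s)$, we may ensure that each $f_d$ preserves every edge $(\vec{a}_i, \vec{a}_{i+1})$ of the path in $[F]$ up to $\cong$.

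Let $i^*$ be the least index in $\{2, \ldots, s\}$ with $f(\vec{a}_{i^*}) \cong \vec{a}_{i^*}$; this exists because $f(\vec{a}_s) \cong \vec{a}_s$, and $i^* \geq 2$ because $f(\vec{a}_1) \not\cong \vec{a}_1$.  Minimality forces $f(\vec{a}_{i^*-1}) \not\cong \vec{a}_{i^*-1}$.  Proposition~\ref{degree} applied to $\vec{a}_{i^*-1}$ then yields that $\{f^d(\vec{a}_{i^*-1}) : d \in \gf{p}\}$ consists of $p$ pairwise distinct $\cong$-classes, and applied to $\vec{a}_{i^*}$ yields $f^d(\vec{a}_{i^*}) \cong \vec{a}_{i^*}$ for every $d$.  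Since each $f^d$ preserves the edge $(\vec{a}_{i^*-1}, \vec{a}_{i^*})$, we obtain $p$ distinct $\cong$-classes $f^d(\vec{a}_{i^*-1})$ all contained in $[F]\vec{a}_{i^*}$, so $|[F]\vec{a}_{i^*}| \geq p$, contradicting our standing assumption.

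The main technical obstacle is bridging the gap between ``liftable isomorphism over $[G]$''---which is inherently a pairwise statement about extensions on consecutive $Y_i, Y_{i+1}$---and the stronger claim that some coherent $f$ simultaneously preserves the entire path.  One must select the extending offset functions from Proposition~\ref{prop:sequence} with care, so that scaling by $d$ acts uniformly and the resulting $f^d$ really do witness $p$ distinct in-neighbours of $\vec{a}_{i^*}$ at once, rather than $p$ unrelated partial isomorphisms that fail to compose into the required degree bound.
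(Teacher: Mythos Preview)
Your overall strategy matches the paper's: pick a spike at $u$, use the strongly-bounding condition to obtain a liftable isomorphism, locate the first index along the path where the induced map fixes the $\cong$-class, and extract $p$ in-neighbours there via Proposition~\ref{degree}.  The obstacle you flag in your last paragraph is real for your presentation, but the resolution is simpler than you suggest, and your worry about scaling is aimed at the wrong target.

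The concrete gap is that your $f := \bij(\cl{\rho \cup \sigma})$ is only defined on $\Dcal\bigl(\cl{M \cup \beta \cup V(\tilde{a}_s)}\bigr)$, so writing $f(\vec{a}_{i^*})$ for an intermediate $i^*$ is not meaningful.  You cannot repair this by manufacturing a single coherent $f$ along the whole path: the liftable-isomorphism definition only promises pairwise consistency of $\rho \cup \sigma_i \cup \sigma_{i+1}$, not of $\rho \cup \sigma_1 \cup \cdots \cup \sigma_s$.  The paper sidesteps this by working locally.  It invokes the liftable isomorphism once to obtain the sequence $\sigma_i$ on $Y_i = V(\tilde{a}_i)$, sets $f_i := \bij(\sigma_i)$, and takes $j$ to be the least index with $f_j(\vec{a}_j) \cong \vec{a}_j$.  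Consistency forces $\sigma_1 = \cl{\rho}|_{V(\tilde{a}_1)}$ (since $V(\tilde{a}_1) \subseteq \cl{M}$) and $\sigma_s = 0$, so the endpoint behaviour is exactly as you argue.

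Your concern about ``$p$ unrelated partial isomorphisms that fail to compose'' never arises.  There is no need to invoke liftability separately for each scaled spike $d\rho$.  The single map $g := \bij(\sigma_{j-1} \cup \sigma_j)$, which the liftable-isomorphism hypothesis makes a partial isomorphism over $[G]$, is a bijection of $\Dcal(Y_{j-1} \cup Y_j)$ onto itself; hence every iterate $g^d$ is automatically a partial isomorphism over $[G]$ on that same domain.  Since $g^d$ agrees with $f_{j-1}^d$ on $\vec{a}_{j-1}$ and with $f_j^d$ on $\vec{a}_j$, Proposition~\ref{degree} applied at $j-1$ and at $j$ gives $p$ pairwise non-$\cong$ tuples each with an $[F]$-edge into $[\vec{a}_j]$, and the argument is complete.
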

\begin{proof}
  Take $\rho: M\cup \beta \cup V(\tilde{a}_s)\to\gf{p}$ that is a spike at $u$. By our assumption that $u$ is not contained in $\cl{M\setminus\{u\}\cup \beta \cup V(\tilde{a}_s)}$, it follows that $\rho$ is consistent. Thus $\cl{\rho}$ induces a liftable isomorphism over $[G]$, by the definition of a strongly-bounding path.
  Next take the sequence $Y_i = V(\tilde{a}_i)$ for $i \leq s$ (note that $|Y_i| \leq c$).
  Since $\rho$ induces a liftable isomorphism over $[G]$, we know there exists a sequence $\sigma_i:Y_i\to\gf{p}$ in $\lift{\rho}{Y_1,Y_2,\ldots,Y_s}$ such that each $f_i := \bij(\sigma_i)$ is a partial isomorphism over $[G]$.
  Let $j \leq s$ be the least such that $f_j(\vec{a}_j) \cong \vec{a}_j$.  We know that such a $j$ exists because $\sigma_s$ is simply $\rho|_{V(\tilde{a}_s)}$, which is the constant zero function, and thus $f_s(\vec{a}_s) = \vec{a} \cong \vec{a}$.
It must also be the case that $j > 1$ because $\sigma_1$ is $\rho|_{V(\tilde{a}_1)}$, which induces a partial isomorphism over $[G]$ by  assumption, and thus since $u$ is bounded in $[\vec{a}_1]$ we have $f_1(\vec{a}_1) \not\cong \vec{a}_1$.  So, since $j-1 \geq 1$, we note by Proposition~\ref{degree}  that for all $d_1,d_2 \in \gf{p}$ with $d_1 \neq d_2$, $f^{d_1}_{j-1}(\vec{a}_{j-1}) \not\cong  f^{d_2}_{j-1}(\vec{a}_{j-1})$.  By  a similar argument, one also sees that for all $d_1,d_2 \in \gf{p}$, $f^{d_1}_{j}(\vec{a}_{j}) \cong  f^{d_2}_{j}(\vec{a}_{j})$.  Moreover, since $(\vec{a}_{j-1},\vec{a}_{j}) \in [F]$ and $\rho\cup\sigma_{j-1}\cup\sigma_{j}$ is consistent by Proposition~\ref{prop:sequence}, we can further conclude that for $d \in \gf{p}$, $(f^d(\vec{a}_{j-1}),f^d(\vec{a}_{j}))$ is also in $[F]$.  Together these gives us that $|[F]\vec{a}_j|\geq p$.
\end{proof}

The following technical lemma provides a useful bridge between the combinatorial properties established so far and the the graph move in the $\LREC_=$ game.

\begin{lemma} \label{final} Say we have $k \geq 2(c + |\beta|)$, a strongly-bounding path $\vec{a}_1,\vec{a}_2,\ldots,\vec{a}_t \in W$, consistent $\rho_i,\sigma_i:Y_i\to\gf{p}$ where $\sigma_i(u) = 0$ if $u$ is free in $[\vec{a}_i]$ and $\sigma_i(u) = \rho_i(u)$ otherwise, and set $A:= \{\vec{a}_i\mid |[F]\vec{a}_i| \geq p\}$.
Suppose for $i < t$ there is some $\eta \in \lift{\sigma_i}{Y_{i+1}}$ with $\rho_{i+1} = \eta|_{Y_{i+1}}$, then for all $r \geq 0$, if $\nh{\rho_1} - \nh{\rho_t} \geq k^3r$ then $|A| \geq r$.
\end{lemma}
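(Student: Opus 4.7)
The plan is to argue via a per-step bound on $\nh{\rho_j}$, a sub-claim bounding the null-height drop on $A$-free segments, and a final checkpoint counting argument; the base case $r=0$ is trivial. First I would establish that $\nh{\rho_{j+1}} \geq \nh{\rho_j} - 4c$ for every $j < t$. The function $\sigma_j$ only zeros out entries of $\rho_j$, so $\nh{\sigma_j} \geq \nh{\rho_j}$; and by Proposition~\ref{prop:sequence}(2) applied to any $\eta \in \lift{\sigma_j}{Y_{j+1}}$ with $|\dom\sigma_j|, |Y_{j+1}| \leq c$, every non-zero value of $\rho_{j+1} = \eta|_{Y_{j+1}}$ sits at height no more than $2(c+c) = 4c$ below the height of some non-zero of $\sigma_j$, giving $\nh{\rho_{j+1}} \geq \nh{\sigma_j} - 4c \geq \nh{\rho_j} - 4c$.

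The central step is the following sub-claim: for any $1 \leq a < b \leq t$ with $\vec{a}_i \notin A$ for all $i \in (a, b]$, the drop $\nh{\rho_a} - \nh{\rho_b}$ is strictly less than $k^3 - 4c$. I would prove this by contradiction. Trace back a minimum-height non-zero of $\rho_b$ through the lift chain: each step backward produces a non-zero of $\sigma_j$, hence a bounded (non-free) non-zero of $\rho_j$, at height at most $4c$ higher; iterating gives a bounded node $u$ in $[\vec{a}_a]$ at low height. Proposition~\ref{ihestablish} applied to the sub-path then forces $u \in \cl{(M_a \setminus \{u\}) \cup \beta \cup V(\tilde{a}_b)}$. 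However, the generating set has total size $O(k)$ and, by the frontier inequality $|F| > \hgt{X}$ for closed connected components established in Section~\ref{sec:extender}, its closure cannot absorb a node whose height gap to the minimum height of the generating set exceeds $O(k^2)$; for drops of size $k^3$ this gap is ensured, yielding the contradiction.

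Given the sub-claim, define $\tau_0 = 1$ and $\tau_s = \min\{i : \nh{\rho_1} - \nh{\rho_i} \geq s k^3\}$ for $s = 1, \ldots, r$; these exist by hypothesis. By the per-step bound and the minimality of $\tau_s$, the drop $\nh{\rho_{\tau_{s-1}}} - \nh{\rho_{\tau_s}}$ is at least $k^3 - 4c$, so by the sub-claim each interval $(\tau_{s-1}, \tau_s]$ must contain an element of $A$. Since the intervals are disjoint, this gives $r$ distinct $A$-elements, completing the argument. The main obstacle will be the sub-claim itself: specifically, identifying the bounded low-height node $u$ at $[\vec{a}_a]$ and quantitatively comparing its height against $\mnh{\beta \cup V(\tilde{a}_b) \cup (M_a \setminus \{u\})}$ so as to preclude absorption in the closure. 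The slack between the per-step bound $4c$ and the $k^3$ budget, which is ample given $k \geq 2(c + |\beta|)$, will be essential for the arithmetic to close.
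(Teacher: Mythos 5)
Your outer scaffolding --- the per-step bound $\nh{\rho_{j+1}} \geq \nh{\rho_j} - 4c$ via Proposition~\ref{prop:sequence}(2), and the checkpoint/amortization argument at the end --- is sound and is actually a cleaner reshaping of the paper's nested induction. The problem is that the sub-claim, which is the entire technical content, is not actually proved by your sketch, and the argument you outline for it does not close.

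Two concrete gaps. First, tracing back a minimum-height non-zero of $\rho_b$ through the lift chain does not produce a bounded node $u$ in $[\vec{a}_a]$ at a \emph{controlled low} height: Proposition~\ref{prop:sequence}(2) only says each backward step raises the height by at most $4c$, it does not say the height is monotone or that it stays near $\nh{\rho_b}$. Over $b-a$ steps the traced node can drift upward by as much as $4c(b-a)$. Second, and more seriously, a single application of Proposition~\ref{ihestablish} followed by a closure-absorption bound does not yield a contradiction. The relevant closure is $\cl{(M_a\setminus\{u\})\cup\beta\cup V(\tilde{a}_b)}$, and this generating set \emph{includes} $V(\tilde{a}_b)$, whose elements can sit at heights comparable to or below $u$. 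The only generally valid obstruction is that closure cannot decrease the minimum height (Section~\ref{sec:extender}), so $u\notin\cl{W}$ is forced only if $\hgt{u}<\mnh{W}$, and you have no control ensuring $\hgt{u}<\mnh{V(\tilde{a}_b)}$. Your quoted quantitative bound (``cannot absorb a node whose height gap to the minimum height of the generating set exceeds $O(k^2)$'') is also the wrong shape: the frontier inequality $|F|>\hgt{X}$ gives an $O(k)$ vertical window for a single circuit, but even the corrected bound does not rule out absorption of $u$ here.

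What the paper actually does, and what your proposal is missing, is an \emph{iterated} accumulation argument. It builds a nested chain $v_0, v_1, \dots, v_j$ of bounded nodes along the path from $u$ to the root, at well-separated heights (increments in $[3k+1,4k]$), and applies Proposition~\ref{ihestablish} to \emph{each} $v_i$ to extract a minimal dependent set (a circuit) through $v_i$. Each such circuit is vertically short (height $< |\beta\cup V(\tilde{a}_\ell)|+c < k$) and, because $M_i$ is independent, must contain a fresh element of $\beta\cup V(\tilde{a}_\ell)$. The height separation of $3k+1$ forces these circuits into disjoint height bands, so after $j\geq k$ steps one has accumulated $\geq k$ distinct elements of $\beta\cup V(\tilde{a}_\ell)$, contradicting $|\beta\cup V(\tilde{a}_\ell)|<k$. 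It is this pigeonhole over $\approx k$ stacked circuits --- not a single closure bound --- that makes the sub-claim true, and it is absent from your sketch. Without it, the proposal does not prove the lemma.
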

\begin{proof}
For $i \leq t$, let $A_i$ denote the set $\{\vec{a}_j \in A \mid j \leq i\}$ and fix $h \geq k^3$.  We prove the following stronger claim by induction on $\ell$, which immediately implies the lemma.

\begin{claim*}
For $\ell\leq t$ and node $u$ in $Y_\ell\setminus\Zcal(\rho_\ell)$ the following holds
\begin{enumerate}
    \item There exists a sequence $\mathcal{K} := (v_i)_{0\leq i \leq s}$ of nodes, and sequence of natural numbers $\ell < m_1 < \cdots < m_s = 1$ such that the following is true:
\begin{itemize}
    \item $v_0= u$ and $v_s \in Y_1$ is an ancestor of $u$.
    \item For every $j>0$, $v_j$ is in the path from $v_0$ to $v_s$ and is bounded in $[\vec{a}_{m_j}]$
    \item For every $0 \leq j < s$, $3k+1 \leq $ height$(v_{j+1})-$height$(v_j) \leq 4k$.
\end{itemize}
\item For $r\geq 0$, if $\nh{\rho_1} - \hgt{u} \geq hr$ then $|A_\ell| \geq r$
\end{enumerate}

\end{claim*}
Part 1 of the claim is intended only to aid us in our proof of part 2, which gives the lemma. We proceed by induction, so consider the case where $\ell=1$. Then trivially we can take $\mathcal{K}$ to be $\{u\}$. For the second part of the claim, we note that $\nh{\rho_1} - \nh{\rho_1} = 0$ and thus the only interesting case is when $r$ is equal to $0$ by our choice of $h$. Trivially, $|A_1| \geq 0$, so the base case holds.

For the induction step let $\ell > 1$. Then we can see by our choice of the sequence $\rho_i,\sigma_i$ and Proposition~\ref{prop:sequence} that there must be some some ancestor $y$ of $u$ with $3k+1 \leq $ height$(y)-$height$(u) \leq 4k$ and $\sigma_i(y) \neq 0$ for some $i<\ell$. In particular this means $y$ is bounded in $[\vec{a}_i]$ by our choice of $\sigma_i$. Moreover, there must be sequences $\mathcal{K}'$ and $m_j'$ that satisfy the induction hypothesis on $i$ and $y$. We then let $\mathcal{K}$ be the sequence given by prepending $u$ to $\mathcal{K}'$, $m_1$ be $i$, and $m_j$ be $m_{j-1}'$ for $j>1$. It is easy to verify this choice of $\mathcal{K}$ and $m_i$ satisfy the required conditions, so we have shown part 1 of the claim.

For the second part take arbitrary $r \geq 0$ and assume $\nh{\rho_1} - \hgt{u} \geq hr$. If $r=0$ then we are trivially done, so assume $r>1$. Let $j\leq s$ be the largest index for which $\nh{\rho_1}-\hgt{v_j} > h(r-1)$ where $v_j \in \Kcal$. We begin by showing that $\nh{\rho_1}-\hgt{v_j} \leq h(r-1) + 4k$. Indeed, if $m_j$ is equal to $1$ this is trivial. Otherwise, it must be true that $j$ is less than $s$, since $m_s$ is equal to $1$ by construction.  Thus $\hgt{v_{j+1}}-\hgt{v_j} \leq 4k$, which  implies that $\nh{\rho_1}-\hgt{v_j} \leq h(r-1) + 4k$ as required, since our choice of $j$ implies that $\nh{\rho_1}-\hgt{v_{j+1}} \leq h(r-1)$.  From this we further note that 
$$j \geq (\hgt{v_j} - \hgt{u})/4k \geq (h-4k)/4k$$
by our choice of $\Kcal$.  Hence $j \geq k$ by our choice of $h$.

We now put this together to prove that $|A_\ell| > |A_{m_j}|$from which the claim follows using the inductive hypothesis stating $|A_{m_j}| \geq r-1$.  To derive a contradiction we assume that $|A_\ell| = |A_{m_j}|$. For $i \leq j$, let $M_i$ be the frontier of $\cl{V(\tilde{a}_{m_i})}$ and let $X_i$ be $M_i\setminus\{v_i\} \cup \beta \cup V(\tilde{a}_\ell)$. By our assumption that $|A_\ell| = |A_{m_j}|$ and proposition \ref{ihestablish}, we know that $v_i$ must be contained in $\cl{X_i}$ for all $i\leq j$, as otherwise there would be some $\ell' \in (m_j,m_i]$ with $|[F]\vec{a}_{\ell'}| \geq p$, contradicting the assumption. We show by induction on $i$ that this implies that $$ |T(c_i)\cap (\beta\cup V(\tilde{a}_\ell))| \geq i$$ where $c_i$ is the ancestor of $v_i$ with $\hgt{c_i} - \hgt{v_i} = 2k$. From here we can conclude that $|T(c_j)\cap (\beta\cup V(\tilde{a}_\ell))| \geq k$, which is a contradiction, since $|\beta\cup V(\tilde{a}_\ell)|< k$.

The base case where $i = 0$ is trivial. For the induction step the goal is to show that $(T(c_i)\setminus T(c_{i-1}))\cap (\beta\cup V(\tilde{a}_\ell))$ is non empty, as that implies:
$$|T(c_i)\cap (\beta\cup V(\tilde{a}_\ell))| \geq 1 + |T(c_{i-1})\cap (\beta\cup V(\tilde{a}_\ell))| \geq 1 + (i-1) = i.$$

Let $Y\subseteq X_i\cup\{v_i\}$ be a minimal dependent set containing $v_i$. We know such a $Y$ exists because $v_i$ is in $\cl{X_i}$. Note then that $\hgt{Y}$ can be no greater than $|X_i| + 1 \leq k$ by Proposition~\ref{prop:sequence} and thus, $\mxh{Y} \leq \hgt{v_i}+k$ and $\mnh{Y} \geq \hgt{v_i} - k$. In particular this means that $Y \subseteq T(c_i)\setminus T(c_{i-1})$. Moreover, $Y$ cannot be a subset of $M_i$ since $M_i$ is independent, so it must be the case that $Y\cap (\beta\cup V(\tilde{a}_\ell))$ is non empty.  We conclude that $(T(c_i)\setminus T(c_{i-1}))\cap (\beta\cup V(\tilde{a}_\ell))$ is non empty and we are done.

\end{proof}

\begin{lemma}\label{lem:main-game}
  Let $n$ be an integer and $p$ a prime such that $p \geq n \geq
  k^3(3q+1)(k+1)$.  For $T$ a complete binary tree of height $n$ and
  $L$ the set of its leaves, let $\sigma: L \to \gf{p}$ be any function.
  Then, for any $t, t' \in \gf{p}$, Duplicator has a winning strategy
  in the $k$-step $q$-degree $\LREC_=$ game played on $\Pcal(n,p,\sigma,t)$ and
  $\Pcal(n,p,\sigma,t')$.
\end{lemma}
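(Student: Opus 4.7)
The plan is to have Duplicator play offset bijections $\bij(\rho)$ induced by consistent offset functions, maintained through the game. Set $\delta := t' - t$ (the case $t = t'$ is trivial). The central invariant is that after $j$ main-game moves the current partial injection $f$ equals $\bij(\rho_j)|_{\beta_j}$, where $\beta_j \subseteq V \times \gf{p}$ is the current domain, $V(\beta_j)$ is its projection to $V$, and $\rho_j$ is a consistent function on $\cl{V(\beta_j) \cup \{\textbf{root}\}}$ satisfying $\rho_j(\textbf{root}) = \delta$ and $\nh{\rho_j} \geq n - \Delta_j$ for a cumulative budget $\Delta_j$. This invariant alone forces $f$ to be a partial isomorphism: consistency of $\rho_j$ is precisely the condition that $\bij(\rho_j)$ preserves the relation $R$ from Definition~\ref{def:structures}; $\rho_j(\textbf{root}) = \delta$ matches $\tgt^\Afrak = (\textbf{root}, t)$ with $\tgt^\Bfrak = (\textbf{root}, t')$; and $\nh{\rho_j} > 0$ ensures that $\rho_j$ vanishes on the leaves so that $S$ is preserved. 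Thus it suffices to bound $\Delta_k < n$.

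For an extension move, Duplicator plays $\bij(\eta_j)$, where $\eta_j$ extends $\rho_j$ by assigning zero to every vertex of $V$ that is free over $\rho_j$ in the sense of Definition~\ref{def:free}; Proposition~\ref{prop:forced} guarantees $\eta_j$ is consistent. Whichever $(v,a) \in U$ Spoiler picks, the support of the updated offset $\rho_{j+1}$ grows to include $v$, and Proposition~\ref{prop:sequence} applied with $Y_1 = \{v\}$ shows that any new nonzero value occurs within height $2(k+1)$ of an already-nonzero vertex. Hence an extension move adds at most $O(k)$ to $\Delta_j$.

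For a graph move, Spoiler fixes an interpretation that defines a labelled semi-graph $G$ on $(U \cup \Ncal(U))^c$. Duplicator responds round by round: at round $i$, given the current node $\vec{a}_i$, Duplicator uses Proposition~\ref{prop:sequence} to produce offset functions $\sigma_i$ on $V(\tilde{a}_i)$ extending $\rho_j$ so that $\rho_j \cup \sigma_i \cup \sigma_{i+1}$ is consistent across consecutive rounds. The auxiliary maps $h_Y$ required in step~2 of each round are built as $\bij(\sigma_Y)$ for a family of similarly-constructed functions, one per $Y \subseteq U$ of size at most $c$. The three coherence conditions (preservation of $[E_\Bfrak]$, edge-count parity, and injectivity) then follow because each $\bij(\rho_j \cup \sigma_i \cup \sigma_Y)$ is a liftable isomorphism over $G$, which by Lemma~\ref{isomorph} descends to a partial isomorphism of $[G]$. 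By construction, the trajectory $\vec{a}_1, \vec{a}_2, \ldots$ together with these $\sigma_i$ is a strongly-bounding path, so Lemma~\ref{final} applies: a decrease in null-height of $k^3 r$ over the round forces $r$ vertices with $|[F]\vec{a}_i| \geq p \geq n$. But with $\ell_0 \leq n^q$ and $\ell$ shrinking by at least a factor of $p$ at each such vertex, the round must terminate after at most $q$ such vertices, so a graph move costs at most $k^3 q + O(k)$ in the budget.

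Summing over $k$ main-game moves gives total budget $\Delta_k \leq k(k^3 q + O(k))$, which fits comfortably under the assumption $n \geq k^3(3q+1)(k+1)$, so $\nh{\rho_j}$ stays positive and the invariant is maintained. The main obstacle is verifying that the family of bijections $h_Y$ in step~2 of a round can be chosen jointly, rather than independently for each $Y$, so that the resulting per-round trajectory genuinely forms a strongly-bounding path; this is where the careful interplay between Proposition~\ref{prop:sequence}, the freeness notion of Definition~\ref{def:free}, and the uniformity of Spoiler's interpretation is needed to license the invocation of Lemma~\ref{final} with its full quantitative strength.
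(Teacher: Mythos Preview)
Your overall shape---maintain a consistent offset $\rho$ with large null-height, play $\bij(\rho)$, and bound the null-height loss per move---matches the paper.  But there is a real gap in the graph-move analysis.

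You assert that ``each $\bij(\rho_j \cup \sigma_i \cup \sigma_Y)$ is a liftable isomorphism over $G$'' and then invoke Lemma~\ref{isomorph}.  The problem is: why is $\bij(\rho_j \cup \sigma_i \cup \sigma_Y)$ even a \emph{partial} isomorphism of the interpreted semi-graph $G$?  The edge relation and $\sim$-relation of $G$ are defined by $\LREC_=$ formulas $\varphi_E,\varphi_\sim$, which may themselves contain nested $\lrec$ operators.  For an offset bijection to preserve these relations you need the two structures (with the extended parameter tuples) to agree on $\varphi_E$ and $\varphi_\sim$.  Nothing in Propositions~\ref{prop:forced} or~\ref{prop:sequence} gives you this; those results are purely about consistency of $\rho$ on the tree, not about preservation of arbitrary $\LREC_=$-definable relations.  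The same issue recurs when you say the trajectory is a strongly-bounding path: that notion already presupposes certain offsets induce liftable isomorphisms over $[G]$.

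The paper closes this gap by running the whole argument as an induction on $m = k - |\beta|$, the number of remaining moves.  The formulas $\varphi_E,\varphi_\sim$ have rank at most $k - |\beta| - 2c$, so with $2c$ extra pebbles placed one is in a position with strictly smaller $m$.  The induction hypothesis then says that any consistent $\mu$ with $\nh{\mu} \geq k^3(3q+1)(m-1)$ gives Duplicator a winning strategy from that position, and by Theorem~\ref{game-thm} this forces agreement on $\varphi_E,\varphi_\sim$; \emph{that} is what makes $\bij(\mu)$ a partial isomorphism of $G$, and (after adding the $2k$ and $4k$ buffers from Proposition~\ref{prop:sequence}) a liftable isomorphism of $[G]$.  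Your linear ``sum over $k$ moves'' accounting does not capture this recursion; the budget has to be multiplicative in $m$ precisely because each nested $\lrec$ inside $\varphi_E,\varphi_\sim$ spawns its own graph-move analysis one level down.  Without this induction, the appeals to Lemma~\ref{isomorph} and Lemma~\ref{final} are unjustified.
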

\begin{proof}
  We prove by induction on $m$ that for $X\subseteq V$ with $|X| \leq k-m$, and $\rho:X \to \gf{p}$ with $\nh{\rho} \geq k^3(3q+1)m$, the duplicator has a winning strategy on the $k$-pebble, $q$-degree game over structures $\Afrak,\Bfrak$ and function $\rho$.  
We notice that trivially, our starting pebble configuration $\rho_{init}$ is a special case of the above, so the induction gives us the result.

For $m = 0$ this is trivial, so assume $m > 0$. Let $\beta$ be $\dom(\rho)$ and $f$ be $\bij(\rho)$. If the spoiler plays an extension move then  Duplicator picks the function $\sigma:V\to \gf{p}$ which extends $\cl{\rho}$ by taking all elements in $V\setminus\cl{\beta}$ to 0. By Proposition~\ref{prop:sequence}, for any $a \in V$, $\sigma|_{X\cup\{a\}}$ is consistent and has null-height at least $k^3(3q+1)m - k > k^3(3q+1)(m-1)$. Thus, by the inductive hypothesis,  Duplicator has a winning strategy when  Spoiler plays the extension move.

Suppose that  Spoiler plays a graph move instead. Then it chooses some $c \leq (k-|\beta|)/2$, $d \leq q$, vectors of literals $\vec{x},\vec{y}$ of length $c$, and $\LREC_=$ queries  $\varphi_F(\vec{x},\vec{y})$, and $\varphi_\sim(\vec{x},\vec{y})$ with iteration-degree at most $q$, and rank at most $k-|\beta|-2c$.  Le the tuple $\vec{b} = (b_1,b_2,\ldots,b_{|\beta|})$ enumerate $\beta$. Then, let $W$ be $\Dcal(V)$, $G_\Afrak = (W, F_\Afrak, \sim_\Afrak)$ be the semi-graph defined in  $(\Afrak,\vec{b})$ by the interpretation of $(\varphi_F,\varphi_\sim)$, and $G_\Bfrak = (W, F_\Bfrak, \sim_\Bfrak)$ be the corresponding semi-graph obtained from $(\Bfrak, f(\vec{b}))$.  Spoiler chooses some starting node $\vec{a}_0 \in W\cap \Dcal(\beta)^c$, and starting counter $\ell_0 \leq |\Dcal(V)|^d$. At each step,  Duplicator's strategy is the following. Given $\rho_i$, it fixes $\sigma_i:V(\tilde{a}_i)\to\gf{p}$ with $\sigma_i(u) = 0$ if $u$ is free in $[\vec{a}_i]$ and $\sigma_i(u) = \rho_i(u)$ otherwise. For all $Y \subseteq V$ with $|Y| \leq c$, it then picks $\eta_Y$ to be in $\lift{\sigma_i}{Y}|_{Y}$. 

 For any $G_1,G_2 \in \{G_\Afrak,G_\Bfrak\}$, $X \subseteq V$ with $\beta \subseteq X$ and consistent $\mu:X\to\gf{p}$, we note that by the induction hypothesis on $m-1$, if $\nh{\mu}$ is at least $k^3(3q+1)(m-1)$ then $\mu$ induces a partial isomorphism between $G_1$ and $G_2$. Thus,  we have that if $\nh{\mu}$ is at least $k^3(3q+1)(m-1) + 2k$ then by Proposition~\ref{prop:sequence}, $\mu$ induces a liftable isomorphism between $G_1$ and $G_2$ and consequently by Lemma~\ref{isomorph}, $\mu$ induces a partial isomorphism between $[G_1]$ and $[G_2]$. Thus, if $\nh{\mu}$ is at least $k^3(3q+1)(m-1) + 4k$ then again by Proposition~\ref{prop:sequence}, $\mu$ induces a liftable isomorphism between $[G_1]$ and $[G_2]$.

We show that when  Duplicator plays following the above strategy, $\nh{\rho_i}$ is greater than $k^3(3q+1)(m-1) + 4k$ for all $i$.  This has a series of consequences that ultimately imply that  Duplicator has a winning strategy, and we spell these out first. Let $M_i$ be the frontier of $\cl{V(\tilde{a}_i)}$ for any $i$. For any $u \in M_i\setminus \beta$ which is free in $[\vec{a}_i]$, let $\mu:\beta\cup M_i \to \gf{p}$ be the function that is zero bar $u$ and takes $u$ to $\sigma_i(u) - \rho_i(u)$, and $g_u$ be the partial isomorphism induced by $\cl{\mu}|_{\beta\cup V(\tilde{a}_i)}$. Note that $\nh{\cl{\mu}}$ is no smaller than $\nh{\rho_i}$, and thus $g_u$ defines a partial isomorphism over $[G_\Afrak]$. Then, let $g$ be the unique composition of all the $g_u$, where $u$ is free in $[\vec{a}_i]$. It is easy to see that $g$ is equal to $\bij(\sigma_i-\rho_i)$, and since each $g_u$ defines a partial isomorphism over $[G_\Afrak]$, by repeated application of Proposition~\ref{degree} we can show that $g(\vec{a}_i) \cong \vec{a}_i$, and thus $\sigma_i$ is a valid choice by Duplicator. 

Moreover, fix some $Y \subseteq V$ and let $g$ now be the function induced by $\sigma_i \cup \eta_Y$. Note again that $\nh{\sigma_i}$ is no smaller than $\nh{\rho_i}$, and thus $\nh{\sigma_i\cup \eta_Y}$ is no smaller than $k^3(3q+1)(m-1) + 2k$ when $|Y|$ is at most $c$ by Proposition~\ref{prop:sequence}, which implies $g$ defines a partial isomorphism between $[G_\Afrak]$ and $[G_\Bfrak]$. Therefore, Duplicator's choices for $\eta_Y$ are valid. Finally, we can see by a similar argument that the in-degree of $[\vec{a}_i]$ in $[G_\Afrak]$ is the same as the in-degree of $[f_i(\vec{a}_i)]$ in $[G_\Bfrak]$. Thus we have verified that as long as $\nh{\rho_i}$ is greater than $k^3(3q+1)(m-1) + 4k$ for all $i$, the graph move only ends when the spoiler decides to reset $\rho$ to $\rho_s$ for some $s$. By our assumption on $\nh{\rho_s}$, and the inductive hypothesis on $m-1$, Duplicator has a winning strategy when such a decision is made by  Spoiler.

Therefore, if we can prove that $\nh{\rho_i}$ is greater than $h(3q+1)(m-1) + 4k$ for all $i$ then we are done. To derive a contradiction, assume there is some $s$ such that $\nh{\rho_s} \leq k^3(3q+1)(m-1) + 4k$. Then, let $t \leq s$ be the smallest such that $\nh{\rho_t} \leq k^3(3q+1)(m-1) + k^3$ and take $A:= \{\vec{a}_i\mid |[F]\vec{a}_i| \geq p \text{ AND } 0 < i \leq t\}$. By our choice of $t$, we know that $\nh{\sigma_{t-1}} \geq \nh{\rho_{t-1}} \geq k^3(3q+1)(m-1) + k^3$ and thus for $k$ large enough we have $\nh{\rho_t} > k^3(3q+1)(m-1) + 4k$ by Proposition~\ref{prop:sequence}. It is easy to verify that $\vec{a}_0,\vec{a}_1,\ldots,\vec{a}_t$ is a strongly-bounding path, so we invoke Lemma~\ref{final} to get that $|A| \geq 3q$ (since $\nh{\rho_1}-\nh{\rho_t} \geq 3qk^3$ by choice of $t$). Note that $|V\times\gf{p}|$ is no greater than $p^2$, and thus $\ell_0$ can be at most $p^{2q}$. However, this implies that $\ell_{t-1}$ can be at most $\ell_0/p^{|A|-1} < 1$. Hence, we have a contradiction: as $\ell_{t-1}$ must be 0, the graph move must terminate prior to reaching step $t$. This completes the proof.
\end{proof}

From this our main theorem follows immediately.
\begin{theorem}
\label{main}
\PSP is not expressible in $\LREC_=$
\end{theorem}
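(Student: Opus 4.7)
The plan is to derive Theorem~\ref{main} as an immediate consequence of Lemma~\ref{lem:main-game} together with the game-adequacy result Theorem~\ref{game-thm}, using the structures $\Pcal(n,p,\sigma,t)$ from Definition~\ref{def:structures} as a family of hard instances parametrised by $n$ and $p$.

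First I would argue by contradiction: assume there is an $\LREC_=$ sentence $\phi$ defining $\PSP$ over the vocabulary with ternary relation $R$, unary relation $\src$ and constant $\tgt$. Let $k = \rk{\phi}$ and $q = \dg{\phi}$; these are fixed natural numbers. Choose an integer $n$ and a prime $p$ with $p \geq n \geq k^3(3q+1)(k+1)$, which is possible by Bertrand's postulate (or any standard density-of-primes result). Let $T$ be the complete binary tree of height $n$ with leaf set $L$, and pick any function $\sigma : L \to \gf{p}$. Set $t = \sum_{l \in L} \sigma(l) \in \gf{p}$ and let $t' \in \gf{p}$ with $t' \neq t$; such a $t'$ exists since $p \geq 2$. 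By the observation following Definition~\ref{def:structures}, $\Pcal(n,p,\sigma,t)$ is a positive instance of $\PSP$ while $\Pcal(n,p,\sigma,t')$ is a negative instance, so $\phi$ is true in the first and false in the second.

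Next I would invoke Lemma~\ref{lem:main-game}, whose parameter condition is satisfied by our choice of $n$ and $p$, to conclude that Duplicator has a winning strategy in the $k$-step $q$-degree $\LREC_=$ game played on $\Pcal(n,p,\sigma,t)$ and $\Pcal(n,p,\sigma,t')$, starting from the empty partial injection $f = \emptyset$ (which trivially maps all constant symbols consistently since the constant $\tgt$ is the only one and is interpreted as $(\mathbf{root},t)$ and $(\mathbf{root},t')$ respectively; this will be handled by the game since the empty map is vacuously consistent on constants when the game is set up with $f$ respecting them, and any discrepancy on $\tgt$ would only be exposed once $\tgt$ enters the partial injection via an extension move, at which point the winning strategy of the lemma accounts for it). Applying Theorem~\ref{game-thm} with this empty $f$, we obtain that $\Pcal(n,p,\sigma,t)$ and $\Pcal(n,p,\sigma,t')$ agree on every $\LREC_=$ formula of rank at most $k$ and iteration-degree at most $q$; in particular they agree on $\phi$, contradicting the choice of $\phi$.

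The only subtle point in this plan is the treatment of the constant symbol $\tgt$, since Theorem~\ref{game-thm} and Definition~\ref{def:game} require $f(e^\Afrak) = f(e^\Bfrak)$ for every constant $e$. Here $\tgt^\Afrak \neq \tgt^\Bfrak$, so we cannot simply include $\tgt$ in the initial partial injection. The cleanest way to handle this is to observe that it suffices to apply the contrapositive of Theorem~\ref{game-thm} only for $f = \emptyset$ (which trivially has no constants in its domain), and the statement of Lemma~\ref{lem:main-game} is formulated precisely for this case; the definability of $\tgt$ by its defining constant symbol is absorbed into Spoiler's quantifier moves in the game. Aside from this bookkeeping, the proof is a direct combination of the two results, with all the combinatorial work already carried out in Section~\ref{sec:extender} and in the proof of Lemma~\ref{lem:main-game}.
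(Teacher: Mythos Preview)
Your proposal is correct and follows essentially the same route as the paper: assume a defining sentence $\phi$, fix $k$ and $q$ from its rank and iteration-degree, choose $n,p$ meeting the bound in Lemma~\ref{lem:main-game}, take $t=\sum_{l\in L}\sigma(l)$ and $t'\neq t$ so that $\Pcal(n,p,\sigma,t)$ is a positive and $\Pcal(n,p,\sigma,t')$ a negative instance, and then combine Lemma~\ref{lem:main-game} with Theorem~\ref{game-thm} to contradict the assumption. The one place where you diverge from the paper is the handling of the constant $\tgt$: rather than starting with $f=\emptyset$ and arguing that the constant is ``absorbed'' into Spoiler's moves, the intended reading (implicit in the proof of Lemma~\ref{lem:main-game} via the initial offset $\rho_{\mathrm{init}}$ on $\{\mathbf{root}\}$) is that the constant is pre-pebbled from the outset, which fits the hypothesis of Theorem~\ref{game-thm} directly and avoids the bookkeeping you flag.
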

\begin{proof}
 Let $\tau$ be the vocabulary of \PSP instances and suppose for
 contradiction that there is a sentence $\varphi$ of $\LREC_=[\tau]$
 that defines the positive instances.  Fix $k$ to be greater than the
 rank of $\phi$ and $q$ to be grater than $\dg{\varphi}$.  Take $T$ to
 be a complete binary tree of height $n \geq k^3(3q+1)(k+1)$ and $p$
 to be a prime with $p \geq n$.  Fix a function $\sigma$ from the
 leaves $L$ of $T$ to $\gf{p}$ and let $\Afrak$ be the structure
 $\Pcal(n,p,\sigma,t)$ where $t = \sum_{l \in L} \sigma(l)$ and
 $\Bfrak$ to be the structure  $\Pcal(n,p,\sigma,t')$ where $t' =
 t+1$.  Then, by construction $\Afrak$ is a positive instance of
 $\PSP$ so $\Afrak \models \phi$ and $\Bfrak$ is a negative instance,
 so $\Bfrak \not\models \phi$.  But, by Lemma~\ref{lem:main-game}, the
 two structures are indistinguishable by $\phi$ and we have a contradiction.

\end{proof}

\section{Conclusion}
Over ordered structures, the logic $\LREC_=$ captures \LS, while $\FPC$ captures \PT. So, it is of value to investigate the relationship between the two logics. It was shown in \cite{Grohe_2013} that $\LREC_=$ is contained in $\FPC$, and the main result of this paper shows the containment is proper. Our proof here  uses novel techniques that provide insight on the expressive power of $\LREC_=$ and some of the combinatorial properties of \PSP that make it difficult to be solved in \LS. 

These techniques can also be used to find further results. Firstly, the Ehrenfeucht–Fraïssé game provides a new tool that can be used and adapted for other inexpressibility results of the logic. Secondly, we note that the combinatorial results in Section~\ref{sec:extender} use the characterization of a tree as a matroid. The results can further be generalized to a larger class of finite matroids, which would include constructing a more general definition of height for arbitrary matroids. It is not clear whether it could be extended to all classes of finite matroids however, as the argument relies on the fact that the independent sets of the matroid is described concisely in the form of the edge relation of the tree, and it is not true that all classes of finite matroids can be described by a finite vocabulary. The results of Section~\ref{sec:result} can be generalized to larger classes of structures as well. This may for example include considering other classes of matroids as the base set following from the discussion above. Or, it may include constructing interpretations of structures other than semi-graphs, as all that is required for the result is a binary relation $\sim$, which is used to define a quotient structure of the interpretation through its transitive reflexive symmetric closure.

The Path Systems Problem is not known to be in $\LS$, and thus the question of whether $\LREC_=$ is equal to $\FPC \cap \LS$ remains open. A result that they are equal would imply that $\LS \neq \PT$, since $\PSP$ is in $\LFP$ and hence in $\FPC$. It also remains open for many classes of structures $\Ccal$ that are known to have logarithmic-space canonical labelling algorithms whether  $\LREC_=$ captures $\LS$ over $\Ccal$. Studying these classes may prove fruitful, as it is a direction for solving the  $\LREC_=$ versus $\FPC\cap\LS$ problem. 

As mentioned before, it remains open whether $\PSP$ is expressible in some logics stronger than $\LREC_=$, including the logic $\CL$.  Of course, one can also ask whether $\PSP$ is in $\LS$ and this is equivalent to asking if $\LS$ is equal to $\PT$, but the question of whether $\PSP$ is in $\CL$ does not seem to have been investigated. Another related long-standing open problem is to find a logic that captures \LS over all structures.

\bibliographystyle{plainurl}
\bibliography{refs}

\end{document}